\documentclass[12pt]{article}
\usepackage[utf8]{inputenc}
\usepackage[T1]{fontenc}
\usepackage{titlesec}
\titleformat{\section}[block]{\large\normalfont\filcenter}{\thesection.}{.5em}{}
\titleformat{\subsection}[runin]{\normalfont\bfseries\filright}{\thesubsection.}{.5em}{}
\titleformat{\subsubsection}[runin]{\normalfont\bfseries\filright}{\thesubsubsection.}{.5em}{}

\usepackage{amsthm, amsmath,amsfonts,amssymb,verbatim,bm,xspace,mathtools}
\usepackage{setspace}
\usepackage[margin=1.25in]{geometry}
\usepackage{tikz}
\usetikzlibrary{calc,math, patterns, intersections, decorations.pathreplacing}
\usepackage{pgfplots}
\usepackage[justification=centering]{caption}
\usepackage{subcaption}
\usepackage{hhline}
\usepackage{float}
\usepackage{siunitx}
\pgfplotsset{compat=1.18}
\usepgfplotslibrary{fillbetween} 
\usepackage{epigraph}

\usepackage[colorlinks=true,linkcolor = blue, citecolor=blue, hyperfootnotes=false, hyperindex,breaklinks]{hyperref}
\usepackage[dvipsnames]{xcolor}
\usepackage{cleveref}

\usepackage{setspace}

\def\owhp{Decreasing Inequality Aversion\xspace}
\def\owhps{decreasing inequality aversion\xspace}
\def\owhm{Increasing Inequality Aversion\xspace}
\def\owhms{increasing inequality aversion\xspace}
\def\vc{\bm{c}}
\def\ve{\bm{e}}

\newcommand*\diff{\mathop{}\!\mathrm{d}}

\def\vunif{\lambda}

\def\Re{\mathbb{R}} \def\R{\mathbb{R}}

\def\ep{\varepsilon}

\def\al{\alpha}
\def\la{\lambda}

\def\phi{\varphi}

\def\zero{\mathbf{0}}
\def\one{\mathbf{1}}

\def\vx{\mathbf{x}}
\def\vy{\mathbf{y}}
\def\vz{\mathbf{z}}
\def\vpi{\bm{\pi}}

\def\os{\emptyset}

\newcommand{\df}[1]{\textit{#1}}

\newcommand{\norm}[1]{\| #1 \|}

\newcommand{\abs}[1]{ \left | #1 \right | }

\renewcommand{\df}[1]{\textbf{\textit{#1}}}

 \usepackage[multiple, flushmargin]{footmisc}

\newtheorem{theorem}{Theorem}

\newtheorem{lemma}{Lemma}
\newtheorem{proposition}{Proposition}

 \theoremstyle{definition}

\newtheorem{example}{Example}

\def\Claim{\textsc{Claim:} }

\makeatletter
\renewcommand\paragraph{\@startsection{paragraph}{4}{\z@}%
  {3.25ex \@plus1ex \@minus.2ex}
  {1.5ex \@plus .2ex}
  {\normalfont\bfseries}}            
\makeatother

\usepackage{quoting}
 \usepackage[shortlabels]{enumitem}
\usepackage[square, authoryear]{natbib}

\begin{document}
\title{Endogenous Inequality Aversion:\\ \Large Decision criteria for triage and other ethical trade-offs\thanks{ The authors thank Marcelo Gallardo for comments on an early draft of the paper.}}

\author{Federico Echenique\thanks{ \footnotesize University of California, Berkeley. \texttt{\footnotesize fede@econ.berkeley.edu}} \and Teddy Mekonnen\thanks{\footnotesize Brown University. \footnotesize\texttt{mekonnen@brown.edu}}  \and M.\ Bumin Yenmez\thanks{\footnotesize Washington University in St. Louis. Durham University Business School (UK).\texttt{\footnotesize bumin@wustl.edu}}}


\date{\today}
\maketitle

\raggedbottom

\begin{abstract}
Medical ``Crisis Standards of Care'' call for a utilitarian allocation of scarce resources in public emergencies, whereas standards of care under normal conditions place relatively greater priority on the worst-off. Inspired by such \textit{triage} rules, we study social welfare criteria whose distributive trade-offs depend on society's well-being, as captured by aggregate welfare. Because the welfare level determines the applicable aggregation criterion, while that criterion in turn determines welfare, the resulting criteria are self-referential. We provide an axiomatic foundation for a family of welfare criteria that become more utilitarian as aggregate welfare falls and more Rawlsian as it rises, thereby formalizing triage guidelines. We also characterize the converse case, in which priority to the worst-off increases as aggregate welfare falls.
\end{abstract}

 \begingroup
\allowdisplaybreaks

\clearpage 
\setcounter{page}{1}
\section{Introduction}
\bigskip

\setlength{\epigraphwidth}{.75\textwidth}
\epigraph{ ...at some times and locations, it will be necessary to change from ``conventional'' to ``contingency'' or ``crisis'' standards of medical care with a resulting change in triage approach from treating the ``sickest first'' to treating those ``most likely to survive'' first.
}{\textit{Triage and Treatment Tools for Use in a Scarce Resources-Crisis Standards of Care Setting After a Nuclear Detonation}\\\cite{coleman2011triage}
}
\bigskip

The allocation of scarce resources is a defining concern of economics and an everyday problem in medical practice. \textit{Triage} refers to the process of determining which patients receive access to which resources, and when. As described in the epigraph, triage decisions involve a trade-off between treating those ``most likely to survive''  and prioritizing the worst-off. Importantly, the resolution of this trade-off is context-dependent, with greater weight placed on efficiency during crises, such as pandemics, natural disasters, or the aftermath of a nuclear detonation. 

In this paper, we study such context-dependent welfare trade-offs. We take the relevant context to be society's overall welfare level, thereby keeping the context internal to the social welfare ordering itself. In the triage case, lower aggregate welfare calls for more efficient allocations, whereas higher aggregate welfare permits greater priority for the worst-off. We represent this shift through the welfare criterion used to aggregate the individual utilities induced by each allocation, with lower welfare associated with a more utilitarian criterion and higher welfare associated with a more Rawlsian one. 

This welfare-based context dependence, however, creates a feedback loop: the context determines the criterion used to aggregate individual utilities, while the aggregate welfare level generated by that criterion determines the context. We close this loop by imposing a consistency requirement: the welfare level used to determine the aggregation criterion must coincide with the welfare level generated when that criterion is applied. Aggregate welfare is thus both an input to and an output of the aggregation procedure, making the context endogenous and self-confirming. We axiomatize such a class of self-referential social welfare criteria and study their allocative implications. 

The remainder of the introduction proceeds in three steps. First, we explain why triage motivates welfare criteria that vary with context and why the relevant context should be defined at the population level. Second, we formalize this population-level context using aggregate welfare and show that doing so creates a consistency requirement. Third, we introduce our axioms and discuss the resulting class of self-referential social welfare criteria.

The ethical foundations of triage are reviewed by \cite{persad2009principles}, who group the main principles into favoring the worst-off patients, treating patients equally, maximizing total benefits, and promoting or rewarding social usefulness. Importantly, \cite{persad2009principles} note that the relative weight placed on these principles depends on the specific circumstances of triage. Under normal circumstances, worst-off patients are often prioritized. However, established triage practices during crises  adopt a more utilitarian approach.  \cite{wang2022fairness} writes that during a pandemic, ``it seems ethically justifiable to shift the focus from regular standards of care to crisis standards, where physicians' duty is primarily to improve and safeguard population health.''  Under such crisis standards of care, \cite{powell2008allocation} note that ``clinical guidelines propose both withholding and withdrawing ventilators from patients with the highest probability of mortality to benefit patients with the highest likelihood of survival.''\footnote{See also  \cite{white2009should}, \cite{rosenbaum2011ethical}, and  \cite{Christian1377}.} 

Such a shift toward utilitarianism was observed during the COVID-19 pandemic, as documented in two prominent case studies: New York State and Italy. New York was an early epicenter of the COVID-19 pandemic in the United States, and its pre-existing ventilator allocation guidelines became a point of national attention. Developed in 2015 by the New York State Task Force on Life and the Law, the guidelines were explicitly utilitarian, stating that ``The primary goal of the Guidelines is to save the most lives in an influenza pandemic where there are a limited number of available ventilators. To accomplish this goal, patients for whom ventilator therapy would most likely be lifesaving are prioritized.''\footnote{\url{https://nysba.org/wp-content/uploads/2020/05/2015-ventilator_guidelines-NYS-Task-Force-Life-and-Law.pdf}} Italy was one of the earliest countries to be severely affected by COVID-19. The Italian health-care system similarly emphasized a utilitarian objective in the allocation of ICU beds. As \cite{deceased2020siaarti} write, ``The underlying principle would be to save limited resources, which may become extremely scarce, for those who have a much greater probability of survival and life expectancy, in order to maximize the benefits for the largest number of people.''

To our knowledge, \citet{roemer2004eclectic} provides the earliest economic formalization of triage. He uses exogenously specified thresholds for individual welfare to identify regions of the utility space in which triage may override priority for the worse-off. These thresholds distinguish, for example, a ``tolerable death'' from a ``worthwhile life.'' Thus, the relevant context in Roemer's formulation is determined by the utility levels of each individual. 

In contrast, crisis standards of care motivate a society-wide rather than individual-level context. \citet{stroud2013crisis} describe crisis standards as marking ``the transition point at which resource allocation strategies focus on the community rather than the individual.'' We formalize this population-level perspective by taking society's aggregate welfare level as the relevant normative context.\footnote{We discuss alternative specifications of the relevant context in Section~\ref{sec:remarks}.} Doing so, however, creates a feedback loop, as we described earlier: aggregate welfare determines the applicable aggregation criterion, while that criterion, in turn, generates aggregate welfare. As the following example illustrates, this feedback loop may produce an inconsistent welfare evaluation.

\begin{example}\label{example:running}
Consider a planner who must allocate a single ventilator to one of two patients. Each patient $i$ has a type that determines her basic survival probability $p^0_i$. These probabilities can take on two values, $L$ (low) or $H$ (high), with $0<L<H$.  If $i$ is put on a ventilator, her survival probability rises to $p^1_i\coloneqq \gamma\cdot p^0_i$, for a fixed scalar $\gamma>1$. We may think of $\gamma$ as the effectiveness of ventilators, and we assume that $\gamma H \leq 1$. 

Assume that one of the patients is type $L$ while the other is type $H$. The utilitarian (efficient) criterion compares $\frac{1}{2}(\gamma L)+\frac{1}{2}H$ with $\frac{1}{2}(L)+\frac{1}{2}(\gamma H)$ and concludes that the ventilator must go to the patient of type $H$. This generates aggregate welfare level of $v^E\coloneqq \frac{1}{2}(L)+\frac{1}{2}(\gamma H)$. The Rawlsian criterion, instead, compares $\min\{\gamma L, H\}$ with $\min\{L, \gamma H\}$ and  concludes that the ventilator must go to the patient of type $L$. This generates an aggregate welfare of $v^F\coloneqq\min\{\gamma L, H\}$, where $v^F<v^E$. In other words, efficiency favors the better-off patient, while Rawlsianism favors the worse-off.

Now suppose that the aggregation criterion changes with the aggregate welfare level $v\in[0,1]$. Fix a threshold $\tau\in(0,1)$ such that the utilitarian criterion applies when $v<\tau$, corresponding to a crisis, while the Rawlsian criterion applies when $v\geq\tau$, corresponding to normal times. 

However, such an approach leads to inconsistent welfare evaluation if $\tau\in (v^F, v^E)$. If the planner applies the utilitarian criterion appropriate during a crisis, she allocates the ventilator to type $H$. Yet the resulting aggregate welfare of $v^E$ exceeds the threshold, placing society outside the crisis regime. Conversely, if the planner applies the Rawlsian criterion for normal times and allocates the ventilator to type $L$, the resulting aggregate welfare of $v^F$ falls below the threshold, placing society inside the crisis regime. Thus, \textit{neither aggregation criterion is consistent with the context generated by the allocation it selects.} More broadly, without a consistency requirement, the context used to justify a decision need not remain appropriate once the consequences of that decision are taken into account. 
\end{example}
\medskip

Our goal in this paper is to axiomatize a class of context-dependent and consistent social welfare criteria. Let us formalize our approach using the two-agent society of \autoref{example:running}, a special case of our general finite-agent model. Let $(x_1,x_2)\in\mathbb{R}_+^2$ denote a utility profile. Let $\Delta\coloneqq\{(\pi_1,\pi_2)\geq 0\mid \pi_1+\pi_2=1\}$ denote the simplex of welfare weights, with $\pi_i$ being a welfare weight on agent $i$. Given a nonempty set $\Pi\subseteq\Delta$, we evaluate $(x_1, x_2)$ according to the worst-case weighted utilitarian criterion
\[
\min_{(\pi_1, \pi_2)\in \Pi}\sum_{i=1}^2\pi_i x_i.
\]
The standard utilitarian criterion is obtained when $\Pi$ contains only the uniform weight $(1/2,1/2)$, while the Rawlsian criterion is obtained when $\Pi=\Delta$. More generally, each set $\Pi$ corresponds to a fixed welfare criterion. 

We are, however, not interested in a fixed criterion, but one that varies with the aggregate welfare level. Therefore, we allow the set of welfare weights to depend on the aggregate level of welfare itself. That is, if the aggregate level of welfare is $v$, then the set of relevant welfare weights is given by $\Pi(v)$. Finally, we impose a consistency requirement: for each utility profile $(x_1,x_2)$, its aggregate welfare level is a value $v$ satisfying 
\[
v=\min_{(\pi_1, \pi_2)\in \Pi(v)} \sum_{i=1}^2 \pi_i x_i.
\]
Hence, to know the relevant set of welfare weights, one needs to evaluate the social welfare function, but to calculate the value of this function, one must know the set of welfare weights to apply. Consequently, our proposed social welfare criteria involve a fixed-point property, making it self-referential.\footnote{The failure of consistency in \autoref{example:running} arises precisely due to a failure of such a fixed point when the relevant set of weights is given by $\Pi(v)=\{(1/2,1/2)\}$ for all $v<\tau$ and $\Pi(v)=\Delta$ for all $v\geq \tau$.}

Our main result is a representation of self-referential social welfare functions that move toward utilitarianism in times of crisis and Rawlsianism in times of abundance. We ground this representation in a new axiom, \textit{\owhms}, which states that if a society prefers an unequal outcome to an egalitarian one, this preference must hold as utilities are scaled down. Standard homothetic welfare criteria impose both this condition and its converse, making welfare trade-offs invariant to scale. In keeping with the ethical foundations for triage, we reject this view. We argue that inequality is a tolerable price for survival but an intolerable price in times of abundance.

We show that \owhms gives rise to a \emph{Fan social welfare function}; so-called because the iso-welfare curves ``fan in'' as welfare rises. This social welfare function is characterized by a set of welfare weights that is monotone in the set-inclusion order: $\Pi(v)\subseteq \Pi(v')$ whenever $v<v'$. At lower welfare levels, the smaller set of weights yields a comparatively more utilitarian criterion. As welfare rises, the set expands and the criterion gives progressively greater priority to the worst-off.

We apply our framework to a ventilator allocation problem within a population of agents and identify a tipping point: when the supply of ventilators relative to the share of the vulnerable population is above a threshold, our model directs resources to the most vulnerable patients. However, when this ratio falls below the threshold, either because ventilators become scarcer or the population's health deteriorates, the optimal policy switches to efficient allocation and prioritizes the strongest patients to maximize expected survival. Such a pivot mirrors the Crisis Standards of Care implemented in ICUs worldwide.

We also consider the opposite axiom, \textit{\owhps}, which states that if a society prefers an unequal outcome to an egalitarian one, this preference must hold when utilities are scaled up. Hence, \owhps captures a context-dependent welfare criterion in which more weight is placed on the worst-off agents as aggregate welfare declines.  We show that \owhps gives rise to another class of Fan social welfare functions that ``fan out'' as welfare rises: as $v$ improves, the set $\Pi(v)$ shrinks. Such social welfare functions exhibit the polar opposite dependence of the ethical criterion on welfare. In a crisis, there is an added concern for the more vulnerable agents, while higher levels of welfare lead to a greater emphasis on efficiency. We discuss applications to educational spending, documenting how governments have responded to crises by shifting resources to educational programs that favor vulnerable students. Another application is measuring income inequality, a literature that has traditionally discussed ways in which inequality aversion should depend on the overall level of welfare.

\section{Motivation: triage in ventilator allocation}\label{sec:motivation}
To motivate our model and main results, we build on the ventilator allocation example from the introduction. We sketch a simple, tractable model that exemplifies how our proposed welfare criterion works. A more formal analysis is presented in \Cref{sec:application}, following our main results.

Consider a unit mass of patients on the interval $[0,1]$. For some $\alpha\in (0,1)$, each agent $i\in [0,\alpha]$ is of type $L$, while each agent $i\in (\alpha, 1]$ is of type $H$. The planner has a mass $k\in (0,1)$  of ventilators.\footnote{We use a continuum population to simplify calculations and obtain closed-form expressions. Our main results in \Cref{sec:model} are for finite populations. We expect these results to extend to the continuum case.} We model a worsening crisis in two ways. First, since we have normalized the mass of patients to one, the parameter $k$ captures the scarcity of ventilators relative to demand. Lower values of $k$ correspond to situations where the existing ventilator supply is severely limited, as in a pandemic. Second, a worsening crisis is reflected in an increase in $\alpha$, capturing a larger fraction of vulnerable patients.

Each allocation policy induces a mapping $\vx:[0,1]\to [0,1]$ from patients to survival probabilities. If patient $i$ is not allocated a ventilator, we say that $i$ remains untreated and has a survival probability of $\vx_i=p_i^0$, where $p_i^0=L$ if $i\leq \alpha$ and $p_i^0=H$ if $i>\alpha$.  If patient $i$ receives a ventilator, we say the patient is treated and has a survival probability of $\vx_i=p^1_i=\gamma p^0_i$. Rather than focusing on the planner's allocation policy, we work directly with the vector (or function) $\vx$ of per-patient survival probabilities implied by the policy in question. Henceforth, we refer to $\vx$ itself as a policy.

The planner evaluates each policy $\vx$ by computing its worst-case weighted utilitarian welfare, where the relevant set of welfare weights depends on the aggregate welfare level of $\vx$ itself. Let $\Delta$ be the space of (Borel) probability measures on $[0,1]$. For a given welfare level $v$, let $\Pi(v)\subseteq \Delta$ denote the set of relevant welfare weights, and let the worst-case weighted utilitarian welfare of policy $\vx$ be defined by
\[
\inf \left\{ \int \vx(i)\diff \vpi(i) : \vpi\in \Pi(v) \right\}.
\]

To be concrete, here we focus on the correspondence $v\mapsto \Pi(v)$ defined by
\[
\Pi(v) = \left\{ (1-\rho(v))\vunif + \rho(v)\vpi \mid \vpi \in \Delta \right\},
\]
where $\vunif$ denotes the uniform measure and $\rho: [0,1]\to [0,1]$ is a smooth and 
strictly increasing function satisfying $\rho(0)=0$ and $\rho(1)=1$. The set of relevant weights thus expands symmetrically around $\vunif$ as $v$ increases---a \emph{$\rho$-contamination} around $\vunif$ (this terminology is borrowed from the robust Bayesian literature, where it is called an $\epsilon$-contamination). The parameter $\rho$ represents the \emph{degree of inequality aversion} at a given welfare level: as $v$ increases, $\Pi(v)$ expands, leading the planner to place more weight on the worse-off agents. In the limit, as $\rho(v) \to 1$, $\Pi(v)$ expands to the full simplex $\Delta$, and the planner’s objective approaches Rawlsianism. Conversely, as $\rho(v) \to 0$, the set $\Pi(v)$ collapses to the uniform measure, and the objective becomes utilitarian.

Given the specification of $\Pi$ as a $\rho$-contamination around $\vunif$ and some welfare level $v$, the worst-case weighted utilitarian welfare of policy $\vx$ can be simply expressed as
\[
\underbrace{(1-\rho(v))\bar{\vx} + \rho(v)\vx_{\min}}_{\eqqcolon F(v;\,\bar \vx,\, \vx_{\min})},
\]
where $\bar{\vx}=\int \vx(i)\diff \vunif(i)$ is the mean utility, and $\vx_{\min}$ is the utility of the worst-off patient, formally defined as the essential infimum of $\vx$ in $[0,1]$. Importantly, this worst-case weighted utilitarian welfare of $\vx$ is incomplete because we use $\Pi(v)$ for an arbitrary welfare level $v$. To complete our definition, $v$ must equal the actual aggregate welfare level of policy $\vx$, i.e., $v=u_\Pi(\vx)$, where $v$ is the fixed point
\begin{equation}
v = F(v;\,\bar \vx,\, \vx_{\min}).  
\end{equation}
The fixed point property is a crucial aspect of our proposal. The relevant level of inequality aversion is welfare-dependent, and welfare depends on the relevant level of inequality aversion.

Since there is a continuum of agents and only two possible types of agents, $\bar\vx$ and $\vx_{\min}$ take simple forms for all policies $\vx$. Furthermore, because $1>\bar\vx\geq \vx_{\min}>0$ by construction and $\rho$ is continuous and strictly increasing with $\rho(0)=0$ and $\rho(1)=1$, the above fixed point problem has a unique solution $v\in(0,1)$. Hence, the aggregate welfare level $u_\Pi(\vx)$ is well-defined for all policies $\vx$. 

For the illustrative purposes of this section, let us further assume that $\rho(v)=v$. In \Cref{sec:application} we consider the case of a general function $\rho$. The welfare level of $\vx$ is then 
\[
u_\Pi(\vx)=\frac{\bar\vx}{1+\bar\vx-\vx_{\min}}.
\]
Hence, given two policies $\vx$ and $\vy$, $u_\Pi(\vx)\geq u_\Pi(\vy)$ if and only if
\[
\bar\vx\left(1-\vy_{\min}\right)\geq \bar\vy\left(1-\vx_{\min}\right).
\]

We focus on two policies that seem natural for the ventilator allocation problem. An \emph{efficient} policy, denoted $\vx^E$,  treats as many type $H$ patients as possible because they generate a larger marginal gain, i.e., $(\gamma-1)H > (\gamma-1)L$. Any remaining ventilators are then used to treat as many type $L$ patients as possible. A \emph{fair} policy, denoted $\vx^F$, instead treats as many type $L$ patients as possible since they are the worst-off. Any remaining ventilators are then used to treat as many type $H$ patients as possible. In \Cref{lemma:wlog} of \Cref{sec:application}, we show that focusing on the fair and efficient policies is without loss of generality, as any other policy is dominated by one of these two.

We compare the adoption of the fair and the efficient policies under two scenarios. First, we consider a crisis. The mass of vulnerable patients is large, and ventilators are highly scarce. In particular, suppose that $k<\alpha$ so that it is impossible to treat all type $L$ patients. Thus, $\vx^E_{\min}=\vx^F_{\min}=L$. Since type $H$ patients generate a strictly larger gain, $\bar \vx^E>\bar \vx^F$. As a result, the planner should adopt a utilitarian criterion during a crisis and implement an efficient policy. 

Next, consider a scenario in which the mass of vulnerable patients is small and ventilators are more readily available. In particular, suppose $k>\max\{\alpha, 1-\alpha\}$, so that it is possible to treat all patients of either type. Under the efficient policy,  $\vx^E_{\min}=L$ and 
\[
\bar{\vx}^E =\underbrace{\gamma\Big(\alpha L + (1-\alpha) H\Big)}_{\text{Full treatment}} - (1-k)(\gamma - 1)L
\]
Under the fair policy, $\vx^F_{\min}=\min\{\gamma L,H\}$ and 
\[
\bar \vx^F=\text{Full treatment}-(1-k) (\gamma - 1)H.
\]
As $k\to 1$, both $\bar \vx^E$ and $\bar\vx^F$ converge to the same \emph{full treatment} value while $1-\vx^F_{\min}<1-\vx^E_{\min}$. As a result, the planner should adopt a Rawlsian criterion in the limit and implement a fair policy.

\begin{figure}[ht]
    \centering
\begin{tikzpicture}
    \begin{axis}[scale=1,
        xlabel={$k$ (Ventilator supply)},
        ylabel={$\alpha$ (Population vulnerability)},
        xmin=0, xmax=1,
        ymin=0, ymax=1,
        xtick={0, 1},
        xticklabels={0, 1},
        ytick={1},
        yticklabels={1},
        axis lines=left,
        title={Optimal Policy Regions},
        width=9cm,
        height=9cm,
        area style,
        axis on top,
    ]

    \draw[blue, ultra thick, name path=curve] plot[domain=0:1/7]( \x,{\x})--plot[domain=1/7:7/9](\x, {(1+\x)/8})--plot[domain=7/9:27/29]   ( \x, {(37*\x-27)/8})--plot[domain=27/29:1]   (\x,{\x}); 

 \addplot[name path=capacity, black, thick, dashed] coordinates {(0,0) (1,1)};

    \addplot[name path=left, draw=none] coordinates {(0,1)};
    \addplot[name path=right, draw=none] coordinates {(1,0)};


    \addplot[green!15] fill between[of=curve and right];

   \addplot[orange!15] fill between[of=curve and capacity];

  \addplot[red!15] fill between[of=capacity and left];


    \node[align=center, text=green!30!black, scale=.9] at (axis cs: 0.6, .1) { 
       \footnotesize \textsc{Fair policy}\\
        \tiny (Prioritize $L$)
    };

    \node[align=center, text=orange!30!black, scale=.9] at (axis cs: 0.6, .33) { 
        \footnotesize \textsc{Efficient policy}\\
        \tiny (Prioritize $H$)
    };

    \node[align=center, text=red!30!black, scale=.9] at (axis cs: 0.4, .8) {
        \footnotesize \textsc{Efficient policy}\\
        \tiny (Prioritize $H$)\\
        \tiny $\alpha > k$ (crisis region)
    };

    \node[anchor=east, blue, scale=.9] at (axis cs: 1, .55) {\footnotesize $\alpha^*(k)$};
    
    \end{axis}
\end{tikzpicture}
 \captionsetup{oneside,margin={0cm,0cm},justification=justified, singlelinecheck=false}
    \caption{Optimal policy depending on fraction of vulnerable (Type $L$) patients and ventilator supply. The blue line illustrates the threshold $\alpha^*(k)$. In this case, $\alpha^*(k)=\frac{1+k}{8}$  when $k\in [1/7, 7/9]$; $\alpha^*(k)=\frac{37k-28}{8}$ when  $k\in [7/9, 27/29]$; and $\alpha^*(k)=k$ otherwise.}
    \label{fig:ventilators}
\end{figure}
Broadly, whether the efficient or the fair policy is optimal depends on the population's vulnerability relative to the scarcity of ventilators. As detailed in \Cref{sec:application}, \autoref{prop:optimal threshold} establishes that for any ventilator supply $k\in(0,1)$, there exists a threshold $\alpha^*(k)$ such that $\vx^E$ is optimal  when $\alpha>\alpha^*(k)$ (i.e., when the share of vulnerable agents is large) and $\vx^F$ is optimal when $\alpha<\alpha^*(k)$. Moreover, because $\alpha^*(k)$ is increasing in $k$, the planner is more likely to prioritize efficiency over the worst-off agents as ventilators become scarcer.

\autoref{fig:ventilators} illustrates these comparative statics for the case where $\rho(v)=v$, $H=0.5$, $L=0.1$, and $\gamma=2$. Holding supply fixed, the optimal policy shifts from the prioritization of the worst-off to efficiency as the population becomes more vulnerable (as $\alpha$ increases). Conversely, fixing the share of vulnerable agents, the planner shifts toward efficiency as the supply of ventilators shrinks (as $k$ decreases). Because $\alpha^*(k)$ is increasing, a larger supply of ventilators allows the planner to prioritize the worst-off agents even at higher levels of population vulnerability. While the exact threshold depends on model primitives, when $\alpha>k$, we reach a critical level beyond which switching to the efficient policy is optimal regardless of the parameters of the model. We term the latter region the ``crisis region.'' 

\section{The Model}\label{sec:model}

Consider a society consisting of $n$ agents. A planner must make a policy decision that affects everyone in society, such as the allocation of scarce resources. Each agent has a utility function defined over possible collective decisions. Each decision thus maps into a vector of ``utils'' $\vx\in\Re^n$ describing the level of utility achieved by each of the $n$ agents. We treat these utils as interpersonally comparable: implicit in our model is the notion that utility has some objective basis. In particular, in medical ethics, utility is often measured in life-years or survival probabilities; and the medical practice of triage is predicated on the comparison of such measures of individual welfare. 

For our purposes, we abstract from the details of the planner's decision problem and work directly with the utility vectors induced by the planner's possible decisions. Our model is therefore \df{welfarist.} Our work seeks to rank utility vectors through a social welfare function that captures the desirability of any potential collective decision. In the triage application we have discussed in the introduction, the planner's decision to allocate a ventilator to patient 1 maps into the vector $(\gamma L, H)$ while her decision to allocate the ventilator to patient 2 maps into $(L, \gamma H)$. 

The literature on social welfare functions is, of course, vast (a partial exposition may be found in \cite{moulin_book}). Two classical examples of social welfare orderings are the utilitarian and Rawlsian orderings. The utilitarian ordering ranks $\vx$ over $\vy$ if and only if $\frac{1}{n}\sum_i \vx_i\geq \frac{1}{n}\sum_i \vy_i$, while the Rawlsian ``max-min'' ordering requires that $\min\{\vx_i\}\geq \min\{ \vy_i\}$. These two welfare orderings are a special case of weighted utilitarianism, $\sum_i\vpi_i \vx_i$, where $\vpi_i$ is a \df{welfare weight} attached to the utility of agent $i$. We assume that $\sum_i\vpi_i=1$. These welfare aggregation methods are the subject of very lengthy literatures in ethics and social choice.\footnote{For an axiomatic treatment of these rules, see, for example, \cite{d1977equity} or \cite{harsanyi1955cardinal} (and \cite{zhou1997harsanyi} for a rigorous and very general version of Harsanyi's result). Our axiomatization is, however, very different from these.} 

\subsection{Preliminary definitions and notation} 
Let $\Re^n_+$ represent the set of nonnegative real vectors, and $\Re^n_{++}$ the set of positive real vectors. Let $\zero$ and $\one$ denote the vectors of zeroes and ones in $\R^n$, respectively. For any two vectors $\vx,\vy\in \Re^n$, $\vx\geq \vy$ if $\vx_i\geq \vy_i$ for every $i \in \{1,\ldots,n\}$. Additionally, $\vx>\vy$ if $\vx\geq \vy$ and  $\vx_i>\vy_i$ for some $i\in \{1,\ldots,n\}$. Similarly, $\vx \gg \vy$ if $\vx_i>\vy_i$ for every $i\in \{1,\ldots,n\}$. Thus $\Re^n_{++}=\{\vx\in\Re^n \mid \vx \gg \zero \}$. We write the inner product between two vectors as $\vx\cdot \vy = \sum_i \vx_i \vy_i$. When $A\subseteq\Re^n$ is a set and $\vx\in\Re^n$ is a vector, we write $\vx\cdot A$ for $\{\vx\cdot \vy:\vy\in A\}$. 

The $(n-1)$-dimensional simplex in $\Re^n$ is denoted by $\Delta \coloneqq \{\vx\in \Re^n_+ \mid \sum_{i=1}^n \vx_i=1\}$.

As a mathematical convention, we assume that $\inf \emptyset = +\infty$. 

We omit the (standard) definitions of continuity and upper hemicontinuity of correspondences. See, for example, \cite{klein1984theory} or \cite{aliprantis2006infinite}.

\subsection{Axioms.}

In our model, the elements of $\Re^n_+$ are utility vectors. Given a scalar $c \in \Re_+$, we denote the corresponding constant vector $(c,\dots, c)\in\Re^n_+$ by $\vc$; thus $\vc = c\one$. These constant vectors represent an egalitarian outcome in which all agents enjoy the same level of utility. Several of our axioms rely on comparisons with constant vectors because they restrict deviations from a preference for egalitarian outcomes.
 
Let $\succeq$ be a complete and transitive binary relation on $\Re^n_+$; these are called weak orders or rational preferences. The binary relation $\succeq$ represents a systematic ranking over utility vectors: an ordinal welfare ranking.

Consider the following properties, or axioms, of the welfare ranking $\succeq$:
\begin{enumerate}
\item \textit{\owhm :} \\ For any $c\in \Re_+$ and $\la \in (0,1)$, $\vx\succ \vc$ implies that $\la \vx\succ \la \vc$.
\item \textit{\owhp :} \\ For any $c\in \Re_+$ and $\la \in (1,\infty)$, $\vx\succ \vc$ implies that $\la \vx\succ \la \vc$. 
\end{enumerate}

\owhm and \owhp are the main substantive axioms in our model, and they are crucial for the social welfare functions that we propose. \owhm says that if the potentially non-egalitarian outcome $\vx$ is deemed preferable to a fully egalitarian outcome $\vc$, then the comparison cannot be overturned when we shrink $\vx$ and $\vc$ proportionally. If we are comfortable giving up the egalitarian outcome $\vc$ in favor of outcome $\vx$, then we must be comfortable doing so when comparing these outcomes at a proportionally lower overall level of well-being.

\owhp is the flip side of \owhms. If $\vx$ is preferable to a fully egalitarian outcome $\vc$, then the preference must be preserved when we blow up $\vx$ and $\vc$ proportionally.  \owhp captures the idea that fairness concerns are more important in bad times than in good times. We shall see that it provides, in a sense, the opposite kind of social welfare than \owhms.

Imposing both \owhms and \owhps is the same as assuming \textit{homotheticity}. For example, the Nash social welfare function $\prod_{i=1}^n \vx_i^{a_i}$ (with parameters $(a_1,\dots, a_n)\gg 0$) is homothetic and therefore displays what we would call constant inequality aversion. 

In addition to increasing and \owhps, we shall consider six additional properties of $\succeq$. The first five are either standard or variations of standard axioms.
\begin{enumerate}
\item \textit{Monotonicity:} For any $\vx,\vy \in \Re^n_+$, $\vx\geq \vy$ implies $\vx\succeq \vy$, and $\vx\gg \vy$ implies $\vx\succ \vy$.
\item \textit{Inada:} If $\vx\in\Re^n_+\setminus\Re^n_{++}$, then $\vx\sim \zero$. 
\item \textit{Weak continuity:} $\{\vx\in\Re^n_+:\vx\succ \vc\}$ is open for any $c\in \Re_+$; and the sets 
$\{c\in\Re_+:\vc \succeq \vx \}$ and $\{c\in\Re_+ \mid \vx\succeq \vc\}$
are closed for any $\vx\in\Re^n_+$.
\item \textit{Strong continuity:} The upper contour set $\{\vy\in\Re^n_+:\vy\succeq \vx\}$ is closed for any $\vx\in \Re^n_+$; and  the lower contour set $\{\vy\in\Re^n_+:\vx \succeq \vy \}$ is closed for any $\vx\in\Re^n_+$.
\item \textit{Convexity:} For any $c\in \Re_+$, $\{\vx\in\Re^n_+ \mid \vx\succ \vc \}$ is convex. 
\item\label{ax:flat} \textit{Mixing invariance:} For any $\vx \in \Re^n_+$, $c\in \Re_+$, and  $\al\in (0,1)$,  $\vx\sim \vc$ implies $\al \vx + (1-\al) \vc\sim \vc$.
\end{enumerate}

The first two axioms are straightforward. Monotonicity means that the planner respects the Pareto principle. Inada singles out zero utility as being especially unacceptable: If one agent has zero utility, it is as if all agents had zero utility. As we shall see, Inada is required to characterize the representation with a monotone decreasing fan. It is not used to characterize the class of social welfare functions applicable to triage problems. 

The two continuity axioms are ``technical,'' but their role in our results is more interesting than in typical representation theorems. Strong continuity is the standard axiom imposed to ensure the existence of a utility representation. We use strong continuity to obtain a representation with a monotone increasing fan. For a monotone decreasing fan, strong continuity turns out to be too strong. We instead impose weak continuity so that we can accommodate situations like one of the examples in \Cref{ex:buminsex} (see \autoref{fig:buminexc}). We discuss this weakening in detail when describing \autoref{fig:buminex}.

The subsequent axioms are more substantive. In these axioms, the utility vectors that are equal in all components (i.e., constant vectors) play a special role. Essentially, a constant vector $\vc\in\Re^n$ gives utility $c\in\Re$ for each agent $i$. So it embodies an egalitarian ideal in which all agents achieve the same level of well-being.

Convexity says that if we regard two potentially non-egalitarian outcomes, $\vx$ and $\vy$, as more desirable than an egalitarian outcome $\vc$, then by mixing $\vx$ and $\vy$ in a convex combination $\al \vx + (1-\al)\vy$, we obtain an outcome that is still more preferable than $\vc$. The idea is that the mixture $\al \vx + (1-\al)\vy$ results in less unequal outcomes relative, at least, to the most unequal among $\vx$ and $\vy$. Therefore, if the deviation from egalitarianism in $\vx$ and $\vy$ is acceptable, at least in comparison to $\vc$, then so is the presumably smaller deviation from egalitarianism in the convex combination. As an axiom, some version of convexity (often a stronger version than we have assumed here) is extremely common in demand theory, decision theory, and social choice. 

Our sixth axiom, mixing invariance, is a qualification of convexity. It states that when a potentially non-egalitarian outcome $\vx$ is regarded as exactly indifferent to a fully egalitarian outcome $\vc$, then there is no strict benefit in a convex combination of $\vx$ and $\vc$. Any \emph{strict} benefit in mixing $\vx$  must result from mixing non-egalitarian outcomes, not from mixing $\vx$ with an egalitarian outcome that is viewed as equivalent to $\vx$. For example, consider a version of the ventilator allocation problem that we discussed in \Cref{sec:motivation}. There are many patients who are classified into a small number of types. Suppose that $\vx$ results from unequal treatment of different types of agents while $\vc$ treats everyone the same. Despite the inequality in $\vx$, the planner regards these as equally desirable: $\vx\sim \vc$. Now suppose that half the agents in each type are given the outcome behind $\vx$ and half are given the outcome behind $\vc$, resulting in the vector $\vy$. Mixing invariance requires that $\vy\sim \vx\sim \vc$. We may think of this requirement as a separability criterion that applies whenever a policy is indifferent to an egalitarian outcome.

Mixing invariance generates the piecewise linear iso-welfare curves in our social welfare functions. Without mixing invariance, we would obtain a representation with the qualitative features of our Fan social welfare functions, but without its simple and tractable closed form. Mixing invariance means that we only need to track the set of welfare weights at the equal-welfare vectors $\vc$. If we didn't impose it, we would need to keep track of the relevant set of welfare weights locally for each $\vx$. We don't have a simple way of doing this.

\subsection{The ``Fan'' social welfare function}

We are interested in social welfare functions that evaluate a utility profile $\vx\in\Re_+^n$ by computing $\min\{\vpi\cdot \vx:\vpi\in \Pi\}$ for some convex and compact subset $\Pi\subseteq\Delta$ of welfare weights. In the utilitarian case, the only relevant weight is the uniform weight, i.e., $\Pi=\{(1/n,\ldots,1/n)\}$. The Rawlsian rule results from allowing all weights, $\Pi=\Delta$. Intermediate cases correspond to sets that are neither singletons nor the full simplex. We can thus think of the size of the set $\Pi$ as the degree of aversion to inequality, similar to the role played by $\rho$ in the application to ventilator allocation from Sections~\ref{sec:motivation} and~\ref{sec:application}. Specifically, given two sets $\Pi$ and $\Pi'$ with $\Pi\subseteq \Pi'$, the planner is more concerned for the worst-off agent under $\Pi'$ than under $\Pi$. One may think of the set $\Pi$ as controlling the curvature of the social welfare function; the iso-welfare curves associated with $\Pi'$ exhibit larger curvature than those of $\Pi$.

We now consider sets of welfare weights $\Pi(c)\subseteq \Delta$ parameterized by a scalar $c\in\Re_+$, which represents the relevant context. While we discuss different specifications of context in Section~\ref{sec:remarks}, here, the context is the level of welfare. Thus, $\Pi(c)$ is the relevant set of weights when the level of welfare is $c$.\footnote{The use of such sets of weights is arguably consistent with Rawls' own discussion of the criteria he proposes in \cite{Rawls1971} and \cite{rawlsDJ}.} For this to make sense, $c$ must equal the actual welfare level: a self-referential property. In other words, the social welfare level associated with a utility profile $\vx$ is obtained by applying the relevant set of  welfare weights, where the set of relevant weights itself depends on the welfare level associated with $\vx$. The representation is therefore the outcome of a fixed-point argument. 

Formally, we define the social welfare of a utility profile $\vx$ as
\[
u_{\Pi}(\vx) \coloneqq \inf \{v \mid \min\{\vpi\cdot \vx \mid \vpi\in \Pi(v) \} = v \}.
\]
For the classes of correspondences considered in our representation theorems, the infimum is achieved at $v=u_{\Pi}(\vx)$ (see Section~\ref{sec:proofs}). Hence, we say that $\vx$ is ranked above $\vy$ when 
\[ u_{\Pi}(\vx)=
\min\{\vpi\cdot \vx \mid \vpi\in \Pi(u_{\Pi}(\vx)) \} \geq \min\{\vpi\cdot \vy \mid \vpi\in \Pi(u_{\Pi}(\vy)) \}=u_{\Pi}(\vy),
\] so that the social welfare function has a max-min aspect, where the relevant set of weights used in the ``min'' part of the representation endogenously depends on the overall level of welfare itself---$u_{\Pi}(\vx)$ or $u_{\Pi}(\vy)$. 

The displayed inequality illustrates that $u_{\Pi}(\vx)$ appears on both sides of the equal sign on the left-hand side of the inequality, and same for $u_{\Pi}(\vy)$ on both sides of the equal sign on the right-hand side. Hence, the representation requires a fixed-point argument (which is a novelty in representation theorems for social orderings).\footnote{A precedent in decision theory is \cite{gul1991theory}. Recursive preferences such as \cite{epstein1991substitution} also have a circular component, albeit of a different nature than ours. The fixed-point property also explains the role of the outer ``inf,'' which plays a role analogous to that of the infimum in the proof of Tarski's fixed-point theorem. We use similar reasoning to obtain the existence of a fixed point even in the absence of continuity. A similar argument using ``sup'' instead is possible.} 

We say that the correspondence $c\mapsto \Pi(c)$ is a \df{monotone increasing fan} if
\begin{enumerate}
\item for any $c\in\Re_+$, $\Pi(c)$ is nonempty, closed, and convex, and 
\item for any $c,c'\in\Re_+$, $c\leq c'$ implies $\Pi(c)\subseteq \Pi(c')$.
\end{enumerate} The name is chosen to reflect iso-welfare curves that ``fan in'' as welfare increases.

If we replace the second property with $c\leq c'$ implies $\Pi(c)\supseteq \Pi(c')$, then we say that $\Pi(\cdot)$ is a \df{monotone decreasing fan}, where the name is chosen to reflect iso-welfare curves that ``fan out'' as welfare increases (equivalently, ``fan in'' as welfare decreases).

In a representation with a monotone increasing fan, when the level of well-being $u_{\Pi}(\vx)$ is low, the set of weights is smaller than when $u_{\Pi}(\vx)$ is high. Hence, the concern for the lowest levels of utility in $\vx$ is greater when $u_{\Pi}(\vx)$ is high than when it is low. Such a representation captures the criteria for triage in which low levels of social welfare correspond to a greater degree of utilitarianism than in high levels of social welfare. 

In contrast, a monotone decreasing fan entails the opposite comparative statics. Low levels of social welfare correspond to a more Rawlsian criterion than high levels of welfare. As we shall see, these two representations also differ in some technical aspects. They involve different continuity axioms, treat zero welfare in different ways, and rely on substantially different fixed-point arguments.

\subsection{An example}\label{ex:buminsex} 

\autoref{fig:buminex} illustrates the iso-welfare curves of monotone increasing and monotone decreasing social welfare functions when $n=2$. \autoref{fig:buminexa} depicts a social welfare function with a monotone increasing fan. In the lower-left region, the iso-welfare curves are linear (utilitarian). As welfare levels rise, the curves become increasingly kinked, eventually resembling the Leontief curves of a Rawlsian social welfare function. Such iso-welfare curves capture, for example, the shift in triage guidelines between normal times and crises that we discussed in the introduction.

\begin{figure}[ht]
    \centering
    \begin{center}
\begin{subfigure}[t]{0.32\textwidth}
\begin{tikzpicture}[scale=.65]
  \draw[->, thick] (-.05,0) -- (7,0) node[below] {\footnotesize $\vx_1$};
  \draw[->, thick] (0,-.05) -- (0,7) node[left] {\footnotesize $\vx_2$};

  \draw[dashed, gray] (0,0) -- (6.5,6.5);

  \foreach \x in {1.5, 2, 2.5, 3, 3.5, 4, 4.5, 5, 5.5, 6, 6.5, 7}{
    \draw[thick, blue!80] (0.1, \x - 0.1) -- ({\x/2 - \x*\x/30}, {\x/2 - \x*\x/30}) -- (\x - 0.1, 0.1);
  }

\draw[thick, blue!80] (.9, 7) -- (2,2) -- (7,.9);
\draw[thick, blue!80] (1.1, 7) -- (2.1,2.1) -- (7,1.1);
\draw[thick, blue!80] (1.5, 7) -- (2.3,2.3) -- (7,1.5);

  \foreach \k in {2.5, 3.5, 4.5}{
    \draw[thick, blue!80] (6.5, \k) -- (\k, \k) -- (\k, 6.5);
  }
\end{tikzpicture}
\caption{Monotone increasing}
\label{fig:buminexa}
\end{subfigure}
\begin{subfigure}[t]{0.32\textwidth}
\begin{tikzpicture}[scale=.65]
  \draw[->, thick] (-.05,0) -- (7,0) node[below] {\footnotesize $\vx_1$};
  \draw[->, thick] (0,-.05) -- (0,6.5) node[left] {\footnotesize $\vx_2$};

  \foreach \x in {1.5,2,2.5,3,3.5}{
    \draw[thick,blue] (.1,\x-.1) -- (\x/2-.5,\x/2-.5) -- (\x-.1,.1);
  }

  \foreach \x in {3.7,4,4.5,5}{
    \draw[thick,blue] (.1,\x-.1) -- (\x/2-.35,\x/2-.35) -- (\x-.1,.1);
  }

  \foreach \x in {5.5,6}{
    \draw[thick,blue] (.1,\x-.1) -- (\x-.1,.1);
  }
\end{tikzpicture}
\caption{Monotone decreasing}
\label{fig:buminexb}
\end{subfigure}
\begin{subfigure}[t]{0.32\textwidth}
\begin{tikzpicture}[scale=.65]
  \draw[->, thick] (-.05,0) -- (7,0) node[below] {\footnotesize $\vx_1$};
  \draw[->, thick] (0,-.05) -- (0,6.5) node[left] {\footnotesize $\vx_2$};

  \foreach \x in {0.5,1,1.4}{
    \draw[thick,blue] (\x,6) -- (\x,\x) -- (7,\x);
  }
  \foreach \x in {3,3.5,4,4.5,5,5.5,6,6.5,7,7.5}{
    \draw[thick,blue] (1.5,\x-1.5) -- (\x-1.5,1.5);
  }
    \filldraw [fill=RedOrange, draw=black] (3,3) circle (0.05);
  \draw [dotted,thin,red](3,-.02)node[below=3pt]{$c$}--(3, 3);
    \draw [dotted,thin,red](-.02, 3)node[left=3pt]{$c$}--(3, 3);
  \draw [dotted](1.4,-.02)node[below]{$c^*$}--(1.4, 1.4);
    \draw [dotted](-.02, 1.4)node[left]{$c^*$}--(1.4, 1.4);
    \filldraw [fill=black, draw=black] (1.4,1.4) circle (0.05);
\end{tikzpicture}
\caption{Monotone decreasing}
\label{fig:buminexc}
\end{subfigure}
 \captionsetup{oneside,margin={0cm,0cm},justification=justified, singlelinecheck=false}
    \caption{Three examples of ``fan'' social welfare orderings. The examples are symmetric for illustration; symmetry is not imposed in the representation theorems.}
    \label{fig:buminex}
    \end{center}
\end{figure}

The other two examples in \autoref{fig:buminex} show representations with decreasing fans. \autoref{fig:buminexb} represents the converse of \autoref{fig:buminexa}: here, the iso-welfare curves in the lower-left resemble Leontief indifference curves. As the welfare level rises, the curves flatten, eventually becoming linear and thus utilitarian. Such iso-welfare curves could, for example, capture shifts in vaccination priorities. During the early stages of a pandemic, when immunity is low, the criterion would prioritize vaccinating the most vulnerable: older populations, overexposed essential workers, and agents with pre-existing comorbidities. In later stages, when immunity is high, the criterion shifts toward prioritizing efficient vaccination schedules.

\autoref{fig:buminexc} depicts a particularly striking example of a monotone decreasing fan. The social welfare function is exactly Rawlsian until the worst-off agent achieves a threshold utility $c^*$; thereafter, the social welfare function becomes exactly utilitarian. This is depicted by Leontief iso-welfare curves, up until all agents achieve a utility of $c^*$. Then the iso-welfare curves become linear and correspond to a standard utilitarian objective. Formally, the decreasing fan is given by the correspondence
\[
\Pi(c) = \begin{cases}
  \Delta & \text{ if } c\leq c^* \\
  \{(1/n,\ldots,1/n) \} & \text{ if } c> c^*.
  \end{cases}
\]
Note that $\Pi$ is upper hemicontinuous in addition to being monotonically decreasing. Indeed, the only issue in verifying upper hemicontinuity is  a sequence $c_n\downarrow c^*$; but since $\Pi(c^*)\supseteq \Pi(c_n)$, the closed graph property holds immediately. 

The example in \autoref{fig:buminexc} illustrates the need to relax the standard strong continuity axiom, which requires weak upper contour sets to be closed.  Consider a sequence  
\[
\vx^k=\left(\frac{c+kc^*}{1+k}, \frac{c(2k+1)-kc^*}{1+k}\right)
\]
with $c>c^*$. Notice that $\min\{\vx^k_i\}>c^*$ and $\frac{1}{n}\sum_ix_i^k=c$ for all $k$, and observe that $\vx^0$ corresponds to the red point at $(c,c)$ in \autoref{fig:buminexc}. As $k$ increases, we move northwest away from this red point along the (linear) iso-welfare curve that passes through $(c,c)$. However, the sequence converges to a vector $\vx^\infty$ with $\min\{\vx_i^\infty\}=c^*$, which lies outside the upper contour set at $c$. Thus, the strong continuity axiom is violated by the representation. The need to allow for this behavior in the case of monotone decreasing, but not for monotone increasing fans, explains why the representation theorems feature different continuity axioms.

\subsection{Main Results}\label{sec:results}

We present two representation theorems, one for each kind of fan. The main substantive difference in the two axiomatizations is the use of either the \owhps or \owhms axioms.

\subsubsection{Monotone increasing fan}

\begin{theorem}\label{thm:monincfan}
    A preference $\succeq$ satisfies monotonicity, strong continuity, convexity, mixing invariance, and \owhms if and only if there exists a continuous and monotone increasing fan $\Pi$ such that $\vx\succeq \vy$ if and only if $u_{\Pi}(\vx)\geq u_{\Pi}(\vy)$.
\end{theorem}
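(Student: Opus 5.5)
The plan for sufficiency is to build $u_\Pi$ from certainty equivalents and to read off $\Pi(c)$ as the normals of supporting hyperplanes to the upper contour set at the diagonal point $\vc$. For necessity, we verify the axioms directly from the fixed-point formula.

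\emph{Step 1: certainty equivalents.} Monotonicity implies that $c\mapsto\vc$ is strictly increasing in $\succeq$. For any $\vx$ we have $\vx_{\max}\one\succeq\vx\succeq\vx_{\min}\one$. Strong continuity makes $\{c:\vc\succeq\vx\}$ and $\{c:\vx\succeq\vc\}$ closed, and together they cover $[\vx_{\min},\vx_{\max}]$. Connectedness then gives a unique $c(\vx)$ with $\vx\sim c(\vx)\one$. So $\vx\succeq\vy$ iff $c(\vx)\ge c(\vy)$, and it remains to show $c(\vx)=u_\Pi(\vx)$ for a suitable fan.

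\emph{Step 2: cone structure at each level.} Fix $c$ and let $U(c)=\{\vx:\vx\succeq\vc\}$, which is closed. It is convex, being the closure of the convex strict set $\{\vx\succ\vc\}$; the closure claim uses monotonicity, since $\vx+\ep\one\succ\vc$ whenever $\vx\succeq\vc$. I claim $U(c)=\vc+K(c)$ for a closed convex cone $K(c)$, intersected with $\Re^n_+$. Mixing invariance gives that the indifference set is star-shaped toward $\vc$ along segments. The harder direction is extending rays outward: if $\vx\sim\vc$, then $\vy=\vc+t(\vx-\vc)\sim\vc$ for $t>1$.

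This is the step I expect to be the main obstacle. I would argue by contradiction in two cases.
\begin{itemize}
\item If $\vy\succ\vc$, then by openness a neighborhood of $\vy$ is strictly preferred. Mixing such points with points strictly above $\vc$ near the diagonal produces, by convexity, strictly preferred points arbitrarily close to $\vx$ from below. Pairing these with monotonicity and the fact that $\vx$ is on the boundary should force $\vx\succ\vc$, a contradiction.
\item If $\vy\prec\vc$, let $\vy\sim\vc'$ with $c'<c$. Mixing invariance at level $c'$ makes the segment $[\vy,\vc']$ indifferent to $\vc'$. By convexity of $U(c')$, the point $\vx$ lies strictly inside $U(c')$ relative to that boundary. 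Combined with the supporting hyperplane at $\vc'$ and monotonicity, I would look for a contradiction with $\vx\sim\vc$.
\end{itemize}
Given the cone structure, set $\Pi(c)=\{\vpi\in\Delta:\vpi\cdot\vz\ge0\ \forall \vz\in K(c)\}$. Monotonicity ensures the dual cone consists of nonnegative vectors, which can be normalized into $\Delta$. Separating hyperplanes then give $\vx\succeq\vc$ iff $\min_{\vpi\in\Pi(c)}\vpi\cdot\vx\ge c$. Each $\Pi(c)$ is nonempty, closed and convex.

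\emph{Step 3: monotonicity of the fan, continuity, and the fixed point.} \owhms says $\vx\succ\vc$ implies $\la\vx\succ\la\vc$ for $\la<1$. In terms of directions, this gives $\operatorname{int}K(c)\subseteq K(\la c)$, hence $K(c)\subseteq K(c')$ for $c'\le c$. Dualizing yields $\Pi(c')\subseteq\Pi(c)$ for $c' \le c$, which is the increasing-fan property. Continuity of $\Pi$ as a correspondence should follow from strong continuity: closed graph comes from closedness of contour sets, and lower hemicontinuity comes from openness of strict sets.

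Finally, fix $\vx$ with $c=c(\vx)$. Since $\vx\sim\vc$, we get $\min_{\Pi(c)}\vpi\cdot\vx=c$, so $c$ is a fixed point. For any $v<c$ we have $\vx\succ\vunif$ at level $v$, that is $\vx\succ v\one$. Openness places $\vx$ in the interior of $U(v)$, so $\min_{\Pi(v)}\vpi\cdot\vx>v$ and $v$ is not a fixed point. Hence $u_\Pi(\vx)=c(\vx)$.

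\emph{Necessity.} Given a continuous increasing fan, the map $v\mapsto\min_{\Pi(v)}\vpi\cdot\vx-v$ is continuous and changes sign on $[\vx_{\min},\vx_{\max}]$, so a fixed point exists. The set of fixed points is closed, so the infimum is attained. One then checks that $\vx\succeq\vc$ iff $\min_{\Pi(c)}\vpi\cdot\vx\ge c$, using the increasing fan to compare fixed points with $c$. From this characterization, convexity and mixing invariance follow because the min of linear forms is concave and is affine on the segment $[\vx,\vc]$. \owhms follows from $\Pi(\la c)\subseteq\Pi(c)$ together with homogeneity of $\vpi\cdot\vx$. Monotonicity and strong continuity follow from continuity of $\Pi$ and positivity of the weights.
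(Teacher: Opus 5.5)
\textbf{Gap in Step 2.} The problem is the case $\vy\prec\vc$, which you flag yourself, and in your ordering it cannot be closed. Suppose $\vy\sim\vc'$ with $c'<c$, and take the normal $\vpi$ supplied at level $c'$ with $\vpi\cdot\vy=c'$. Writing $\vx=\frac1t\vy+(1-\frac1t)\vc$ gives $\vpi\cdot\vx=\frac{c'}{t}+(1-\frac1t)c<c$. This contradicts $\vx\sim\vc$ only if $\vpi$ also supports $\{\vz:\vz\succeq\vc\}$ at $\vc$, that is, only if $\Pi(c')\subseteq\Pi(c)$.

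Pareto monotonicity does not give that inclusion. It is the increasing-fan property, and you obtain it only in Step 3, by translating increasing inequality aversion into $\operatorname{int}K(c)\subseteq K(\lambda c)$. That translation presupposes the cone structure you are trying to prove, so the argument is circular. The ray-extension claim can be proved level by level without the axiom, but that needs a genuinely different convex-geometric argument (e.g.\ on the boundary of two-dimensional sections of $U(c)$), which your sketch does not contain.

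\textbf{How the paper avoids it.} The paper reverses the order:
\begin{enumerate}
\item It defines $\Pi(c)$ directly as the normalized normals of hyperplanes supporting $\{\vz:\vz\succ\vc\}$ at $\vc$.
\item It derives $\Pi(c')\subseteq\Pi(c)$ for $0<c'<c$ straight from the axiom: if $\vpi\in\Pi(c')$ and $\vz\succ\vc$, then $(c'/c)\vz\succ\vc'$, so $\vpi\cdot\vz\ge c$.
\item It gets a touching normal for $\vx\sim\vc$ by separating the segment $[\vx,\vc]$, which is indifferent to $\vc$ by mixing invariance, from $\{\vz:\vz\succ\vc\}$.
\end{enumerate}
With these in hand, your level-$c'$ argument is exactly the paper's key claim that $\min_{\vpi\in\Pi(c)}\vpi\cdot\vx\ge c$ implies $\vx\succeq\vc$. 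The cone structure then becomes a corollary rather than a prerequisite.

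\textbf{Level $c=0$.} Here the apex is the corner of the domain and $U(0)=\Re^n_+$, so your construction does not pin down $K(0)$. The natural choice $K(0)=\Re^n_+$ gives $\Pi(0)=\Delta$. That violates $\Pi(0)\subseteq\Pi(c)$ for every non-Rawlsian order, and it breaks your fixed-point step. For the utilitarian order, $\vx=(1,0,\dots,0)\succ\zero$, yet $\min_{\vpi\in\Delta}\vpi\cdot\vx=0$ makes $v=0$ a fixed point, so $u_\Pi(\vx)=0$. The paper sets $\Pi(0)\coloneqq\bigcap_{c>0}\Pi(c)$. It then gets $\min_{\vpi\in\Pi(0)}\vpi\cdot\vx>0$ for $\vx\succ\zero$ from $\vx\succ\vc$ for some small $c>0$.

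\textbf{Lower hemicontinuity.} You assert this rather than argue it, and it is not routine. For $c^k\uparrow c$, the paper separates $\bigcup_k\Pi(c^k)$ from a ball around a point of $\Pi(c)$. This produces $\vz$ with $\vc^k+\vz\succeq\vc^k$ for all $k$ but $\vc\succ\vc+\vz$, contradicting closedness of $\succeq$. That argument again relies on the characterization $\vx\succeq\vc\iff\min_{\vpi\in\Pi(c)}\vpi\cdot\vx\ge c$, which is the step you have not established.

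\textbf{Necessity.} Your necessity direction is sound. Reducing strong continuity to closedness of $\{\vy:\min_{\vpi\in\Pi(c)}\vpi\cdot\vy\ge c\}$ and $\{\vy:\min_{\vpi\in\Pi(c)}\vpi\cdot\vy\le c\}$ is in fact slicker than the paper's sequence argument.
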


Monotone increasing fans correspond to our discussion of triage protocols and crisis standards of care. \autoref{thm:monincfan} provides an axiomatic foundation for such a social welfare function.

\subsubsection{Monotone decreasing fan}

\begin{theorem}\label{thm:welldef} If $\Pi$ is a monotone decreasing fan, then $u_{\Pi}$ is well defined \textup{(}$u_{\Pi}(\vx)\in \Re$ for all $\vx\in \Re^n_+$\textup{)}, and the preference relation defined by $u_{\Pi}$ satisfies monotonicity, convexity, \owhps, and mixing invariance. Moreover, if $\Pi(0)=\Delta$ then $u_{\Pi}$ satisfies the Inada axiom.
\end{theorem}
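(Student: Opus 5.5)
The plan is to reduce everything to the one-variable map $f_{\vx}(v)\coloneqq\min\{\vpi\cdot\vx\mid \vpi\in\Pi(v)\}$, defined for $v\in\Re_+$. The minimum is attained because $\Pi(v)$ is a nonempty closed subset of the compact simplex. Since $\Pi$ is a monotone decreasing fan, $v\le v'$ implies $\Pi(v')\subseteq\Pi(v)$, so $f_{\vx}$ is \emph{nondecreasing} in $v$. Moreover $f_{\vx}(v)\in[\min_i\vx_i,\max_i\vx_i]\subseteq\Re_+$ for every $v$.

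\textbf{Well-definedness.} Here I would use Tarski's fixed-point theorem rather than any continuity, which we do not have. The map $f_{\vx}$ is a nondecreasing self-map of the complete lattice $[\min_i\vx_i,\max_i\vx_i]$, so its fixed points form a nonempty complete lattice. Every fixed point lies in this interval, so the infimum in the definition of $u_\Pi(\vx)$ equals this least fixed point. It is a real number and is attained. Tarski's construction also gives the key description
\[
u_\Pi(\vx)=\inf\{v\ge 0\mid f_{\vx}(v)\le v\}.
\]
From this I would extract two working criteria. First, $u_\Pi(\vc)=c$, since $f_{\vc}\equiv c$. Second, for $c\ge 0$,
\[
u_\Pi(\vx)>c \iff f_{\vx}(v)>v \text{ for all } v\in[0,c].
\]
For the forward direction, if $f_{\vx}(v)\le v$ for some $v\le c$, then $u_\Pi(\vx)\le v\le c$. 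For the converse, the least fixed point $v^*$ satisfies $f_{\vx}(v^*)=v^*$, so it cannot lie in $[0,c]$. This step, getting existence and attainment from Tarski plus monotonicity alone, is the main obstacle. Once the criterion is in hand, each axiom is a short verification.

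\textbf{The axioms.}

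\emph{Monotonicity.} If $\vx\ge\vy$, then $f_{\vx}\ge f_{\vy}$ pointwise. Hence $\{v\mid f_{\vx}(v)\le v\}\subseteq\{v\mid f_{\vy}(v)\le v\}$, which gives $u_\Pi(\vx)\ge u_\Pi(\vy)$. If $\vx\gg\vy$, pick $\ep>0$ with $\vx\ge\vy+\ep\one$, so that $f_{\vx}\ge f_{\vy}+\ep$. With $v^*=u_\Pi(\vx)$ we get
\[
f_{\vy}(v^*-\ep)\le f_{\vy}(v^*)\le v^*-\ep .
\]
The case $v^*<\ep$ is trivial, since then $u_\Pi(\vy)\le \min_i \vy_i<v^*$. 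Otherwise $u_\Pi(\vy)\le v^*-\ep<u_\Pi(\vx)$.

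\emph{Convexity.} By the criterion, $\{\vx\mid\vx\succ\vc\}=\bigcap_{v\in[0,c]}\{\vx\mid f_{\vx}(v)>v\}$. Each $\vx\mapsto f_{\vx}(v)$ is concave, being a minimum of linear functions, so each set in the intersection is convex, and so is the intersection.

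\emph{\owhp.} Let $\vx\succ\vc$ and $\la>1$. We have $f_{\la\vx}(v)=\la f_{\vx}(v)$. For $v\in[0,\la c]$, we have $v/\la\le c$ and $v/\la\le v$, so monotonicity of $f_{\vx}$ gives
\[
\la f_{\vx}(v)\ge\la f_{\vx}(v/\la)>\la\cdot v/\la=v .
\]
The criterion then yields $\la\vx\succ\la\vc$.

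\emph{Mixing invariance.} Suppose $\vx\sim\vc$, so $f_{\vx}(c)=c$ and $f_{\vx}(v)>v$ for $v<c$. Let $\vz=\al\vx+(1-\al)\vc$. Since $\vpi\cdot\vc=c$ for every $\vpi\in\Delta$, we get $f_{\vz}(v)=\al f_{\vx}(v)+(1-\al)c$. Then $c$ is a fixed point of $f_{\vz}$, and for $v<c$,
\[
f_{\vz}(v)>\al v+(1-\al)v=v .
\]
So $c$ is the least fixed point of $f_{\vz}$, i.e. $u_\Pi(\vz)=c$ and $\vz\sim\vc$.

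\emph{Inada.} Suppose $\Pi(0)=\Delta$ and $\vx_i=0$ for some $i$. Then $f_{\vx}(0)=\min_{\vpi\in\Delta}\vpi\cdot\vx=0$, so $0$ is a fixed point and $u_\Pi(\vx)=0=u_\Pi(\zero)$.
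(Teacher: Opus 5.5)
Your proposal is correct and follows essentially the paper's route. The paper re-derives the Tarski argument by hand for $F_{\vx}=\{v \mid m_v(\vx)\le v\}$, whereas you cite Knaster--Tarski on $[\min_i \vx_i,\max_i\vx_i]$; the axiom checks then rest on the same properties of $m_v$ (concavity, homogeneity, additivity along constant vectors, and monotonicity in $v$).

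You go slightly beyond the written proof by also verifying the strict half of monotonicity ($\vx\gg\vy\Rightarrow \vx\succ\vy$), which the paper does not spell out. There is one slip in that part: your ``trivial case'' $v^*<\ep$ is justified with a reversed inequality, since in fact $u_\Pi(\vy)\ge\min_i\vy_i$. However, that case is vacuous, because $v^*=f_{\vx}(v^*)\ge\min_i\vx_i\ge\ep$ (as $\vx\ge\vy+\ep\one\ge\ep\one$).
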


The main purpose of stating \autoref{thm:welldef} lies in clarifying that, despite the need for a fixed-point argument, the representation is well defined even without a continuity property.

The next result is our representation theorem for decreasing fans. It does require a continuity property, but a substantially weaker one. It is important to allow for some discontinuities, as illustrated in \autoref{fig:buminexc}.

\begin{theorem}\label{thm:main}
A preference $\succeq$ satisfies monotonicity, Inada, weak continuity, convexity, \owhps, and mixing invariance if and only if there exists an upper hemicontinuous and monotone decreasing fan $\Pi$ satisfying $\Pi(0)=\Delta$ such that $\vx\succeq \vy$ if and only if $u_{\Pi}(\vx)\geq u_{\Pi}(\vy)$.
\end{theorem}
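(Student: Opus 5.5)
Sufficiency follows from \autoref{thm:welldef} plus a separate check of weak continuity; the substance is necessity.

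\textbf{Sufficiency.} By \autoref{thm:welldef}, a monotone decreasing fan with $\Pi(0)=\Delta$ yields a well-defined $u_\Pi$ satisfying monotonicity, convexity, \owhps, mixing invariance and Inada. It remains to verify weak continuity using upper hemicontinuity of $\Pi$. The plan is:
\begin{enumerate}[(i)]
\item Show that $u_\Pi(\vc)=c$. The function $v\mapsto \min \vpi\cdot\vc - v=c-v$ vanishes only at $v=c$.
\item Characterize comparisons with egalitarian vectors: $\vx\succ\vc$ if and only if $\min\{\vpi\cdot\vx:\vpi\in\Pi(v)\}>v$ for all $v\le c$. The idea is that $u_\Pi(\vx)$ is the first crossing of the function $g_\vx(v)=\min_{\Pi(v)}\vpi\cdot\vx - v$. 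Because $\Pi$ is decreasing, $g_\vx$ is nondecreasing in the fan direction minus $v$; upper hemicontinuity makes $v\mapsto\min_{\Pi(v)}\vpi\cdot\vx$ lower semicontinuous, so the infimum of the zero set is attained.
\item Deduce openness of $\{\vx:\vx\succ\vc\}$ from lower semicontinuity jointly in $(v,\vx)$ together with compactness of $[0,c]$.
\item Obtain closedness of $\{c:\vc\succeq\vx\}$ and $\{c:\vx\succeq\vc\}$ directly from $u_\Pi(\vc)=c$: these sets are $[u_\Pi(\vx),\infty)$ and $[0,u_\Pi(\vx)]$.
\end{enumerate}

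\textbf{Necessity, construction of the fan.} Given $\succeq$ satisfying the axioms, the plan is:
\begin{enumerate}[(i)]
\item Define the certainty equivalent $e(\vx)=\sup\{c:\vx\succeq\vc\}$. By monotonicity and weak continuity, $\vx\sim \mathbf{e(\vx)}$, and $e$ represents $\succeq$.
\item For each $c>0$, let $U_c=\{\vx:\vx\succ\vc\}$. It is open and convex by weak continuity and convexity. Its closure is the candidate weak upper contour set.
\item Mixing invariance implies that $\{\vx:\vx\sim\vc\}$ is star-shaped toward $\vc$ along segments. Combined with convexity of $U_c$, this forces the indifference set to be the boundary of a convex cone with vertex $\vc$, i.e.\ $\overline{U_c}=\vc+K_c$ for a closed convex cone $K_c$. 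Monotonicity gives $K_c\supseteq\Re^n_+$.
\item Define $\Pi(c)=\{\vpi\in\Delta:\vpi\cdot\vz\ge0\ \forall \vz\in K_c\}$, the normalized dual cone. Then $\overline{U_c}=\{\vx:\min_{\Pi(c)}\vpi\cdot\vx\ge c\}$.
\end{enumerate}

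\textbf{Necessity, properties of the fan and the representation.}
\begin{enumerate}[(i)]
\item \emph{Monotonicity of the fan.} \owhps gives $\lambda(U_c-\vc)\subseteq U_{\lambda c}-\lambda\vc$ for $\lambda>1$. Since the sets involved are cones, this yields $K_c\subseteq K_{c'}$ for $c<c'$, hence $\Pi(c)\supseteq\Pi(c')$.
\item \emph{Value at zero.} Set $\Pi(0)=\Delta$, which is consistent with Inada.
\item \emph{Upper hemicontinuity.} The closed-set parts of weak continuity yield a closed graph. Take $c_k\to c$, $\vpi_k\to\vpi$ with $\vpi_k\in\Pi(c_k)$; if $\vpi\notin\Pi(c)$, pick $\vx$ indifferent to $\vc$ with $\vpi\cdot\vx<c$, and derive a contradiction with closedness of $\{c':\vx\succeq\vc'\}$.
\item \emph{Representation.} Verify $u_\Pi(\vx)=e(\vx)$. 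First, $v=e(\vx)$ solves the fixed-point equation, since $\vx$ lies on the boundary of $\vc+K_c$ at $c=e(\vx)$. Second, no smaller $v$ does: for $v<e(\vx)$ we have $\vx\succ\mathbf{v}$, so $\vx\in U_v$ and $\min_{\Pi(v)}\vpi\cdot\vx>v$, using openness of $U_v$ and the supporting-hyperplane description. Inada handles boundary vectors, where $e=0$ and $\Pi(0)=\Delta$ gives value $0$.
\end{enumerate}

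\textbf{Main obstacle.} I expect the hardest step to be step (iii) of the fan construction: showing that mixing invariance plus convexity make each indifference class exactly the boundary of a translated convex cone (piecewise-linear through $\vc$), without strong continuity. Without strong continuity, $\overline{U_c}$ may contain points strictly better than $\vc$'s class only in a limit sense, as in \autoref{fig:buminexc}. My first approach would be to work with the open set $U_c$ and show that $\vc+t(\vx-\vc)\in U_c$ for all $t>0$ whenever $\vx\in U_c$. For $t<1$ this follows from mixing invariance and convexity; for $t>1$ I would argue by contradiction, using openness and mixing invariance applied to a hypothetical indifferent point further out. I would then take $K_c$ to be the closure of the cone $U_c-\vc$.
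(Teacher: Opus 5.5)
The direction from the representation to the axioms is sound and matches the paper's in substance. The paper gets openness of $\{\vx:\vx\succ\vc\}$ from $\inf_{v\in[0,c]}(m_v(\vx)-v)>0$, using upper hemicontinuity and compactness, plus the fact that $m_v$ is $1$-Lipschitz in the sup norm uniformly in $v$. That is the same compactness argument as your joint lower semicontinuity.

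The other direction follows the paper's skeleton: certainty equivalent, the open convex sets $U_c$, supporting functionals at $\vc$, decreasing inequality aversion for monotonicity of the fan, and Inada for $\Pi(0)=\Delta$. However, step (iii) of your construction is false under the theorem's hypotheses, so the ``main obstacle'' you identify cannot be overcome.

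Take the fan of \autoref{fig:buminexc} with $n=2$, which satisfies every axiom, and a level $c>c^*$. Then:
\begin{itemize}
\item $U_c=\{\vx:\min_i\vx_i>c^*,\ \vx_1+\vx_2>2c\}$;
\item the indifference class of $\vc$ is the bounded open segment $\{\vx:\vx_1+\vx_2=2c,\ \min_i\vx_i>c^*\}$;
\item $\overline{U_c}=\{\vx:\min_i\vx_i\ge c^*,\ \vx_1+\vx_2\ge 2c\}$, which is not a translated cone.
\end{itemize}
For $\vx=(c^*+\ep,M)\in U_c$ with $0<\ep<c-c^*$ and $M$ large, the points $\vc+t(\vx-\vc)$ leave $U_c$ for large $t$, because the first coordinate drops below $c^*$. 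So your argument for $t>1$ must fail. Since $\Pi(c)=\{(\tfrac12,\tfrac12)\}$, the identity $\overline{U_c}=\{\vx:\min_{\vpi\in\Pi(c)}\vpi\cdot\vx\ge c\}=\{\vx:\vx_1+\vx_2\ge 2c\}$ asserted in your step (iv) fails too.

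The structural reason is that, for a decreasing fan, $u_\Pi(\vx)\ge c$ if and only if $m_v(\vx)>v$ for every $v<c$. The upper contour set of $\vc$ is therefore an intersection of translated cones with vertices $v\one$, $v<c$. The larger sets $\Pi(v)$ at lower levels can cut it off away from $\vc$. Mixing invariance only makes an indifference class star-shaped toward $\vc$; it says nothing about rays beyond $\vx$.

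What the construction actually needs is a local supporting-hyperplane fact, and this is the one place the paper's construction uses mixing invariance. Let $\Pi(c)=\{\vpi\in\Delta:\vpi\cdot\vy\ge c\text{ for all }\vy\in U_c\}$. This is exactly your normalized dual cone if you take $K_c$ to be the closed conic hull of $U_c-\vc$, as in your fallback. The needed fact is that every $\vx\sim\vc$ admits some $\vpi_\vx\in\Pi(c)$ with $\vpi_\vx\cdot\vx=c$.

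The paper proves this by noting that mixing invariance puts the segment $\{\al\vx+(1-\al)\vc:\al\in[0,1]\}$ in the indifference class of $\vc$, hence outside the open convex set $U_c$. It then separates the segment from $U_c$. In your language: if $\vx-\vc$ were interior to $K_c$, it would equal $\la(\vy-\vc)$ for some $\vy\in U_c$ and $\la>0$. Since $U_c$ is open and convex with $\vc\in\overline{U_c}$, the half-open segment $(\vc,\vy]$ lies in $U_c$. That contradicts $\vc+s(\vx-\vc)\sim\vc$ for small $s>0$.

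Replacing (iii) by this local fact rescues your fixed-point step. Your fan-monotonicity argument (which needs only conic hulls, not cones), your treatment of $\Pi(0)$, and your bound $m_v(\vx)>v$ for $v<e(\vx)$ then go through unchanged.

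Your upper hemicontinuity step also needs to be replaced, for two reasons.
\begin{itemize}
\item An indifferent $\vx$ with $\vpi\cdot\vx<c$ need not exist. Take $n=2$ and $\Pi(v)=\{(p,1-p):|p-\tfrac12|\le r(v)\}$ with $r(v)=\min\{\tfrac12,\,r_0+\sqrt{\max\{c-v,0\}}\}$, $0\le r_0<\tfrac12$, $c\ge\tfrac14$. Then $m_v(\vx)=\tfrac12(\vx_1+\vx_2)-r(v)|\vx_1-\vx_2|$. Any $\vx\ne\vc$ with $m_c(\vx)=c$ has $m_v(\vx)\le v$ for $v$ slightly below $c$, so the indifference class of $\vc$ is $\{\vc\}$.
\item Closedness of $\{c':\vx\succeq c'\one\}$ yields no contradiction when $c_k\downarrow c$. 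That case is instead trivial from monotonicity of the fan.
\end{itemize}
Argue instead, as the paper does, with $U_c$ directly. If $\vy\succ\vc$ then $e(\vy)>c$, so $\vy\succ c_k\one$ for large $k$. Hence $\vpi_k\cdot\vy\ge c_k$, and in the limit $\vpi\cdot\vy\ge c$.
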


\subsection{Remarks}\label{sec:remarks}

\begin{quoting}[leftmargin=.5cm]
\mbox{}\vspace{-\baselineskip}
\begin{description}[parsep=0pt, itemsep=2pt, leftmargin=0em, labelindent=0pt, itemindent=0pt]
\item[Inada condition] The Inada condition of \autoref{thm:main} arises because we work with the non-negative orthant rather than $\Re^n_{++}$. It is likely possible to obtain the same result without imposing an Inada condition if we work with a more restrictive domain. 
\item[Uniqueness of representation] If $u_\Pi$ and $u_{\Pi'}$ are two representations of the same preference order $\succeq$, then $u_\Pi=u_{\Pi'}$. To see why, take any $\vx\in\mathbb{R}_+^n$, and notice that we can always find a constant $c\in\mathbb{R}_+$ such that $\vx\sim \vc$. Since both  $u_\Pi$ and $u_{\Pi'}$ represent $\succeq$, we must have $u_\Pi(\vx)=u_\Pi(\vc)$ and $u_{\Pi'}(\vx)=u_{\Pi'}(\vc)$. However, by definition, $u_{\Pi}(\vc)=u_{\Pi'}(\vc)=c$, and thus, $u_{\Pi}(\vx)=u_{\Pi'}(\vx)$. Since $\vx$ was chosen arbitrarily, the two social welfare functions must coincide everywhere.

\item[Alternative notions of context] One might instead let the relevant welfare weights depend on a context other than aggregate welfare. There are several natural approaches. First, one could model the context as an exogenous state and assign a different social preference to each state. However, such a model would require additional structure justifying both the origin of the relevant states and the normative justification for the social preference associated with each one. Second, one could let the entire utility profile determine the context. This is too granular for our purposes: it permits as many contexts as utility profiles and therefore places little discipline on how the aggregation criterion varies across profiles. Third, one could index the aggregation criterion by a statistic of the utility profile. This reduces the dimensionality of the context, but the choice of statistic itself requires normative justification and may discard information central to the welfare problem. For example, average utility assigns the same context to $\vc$ and $\vx=(nc,0,\ldots,0)$ for any $c>0$, despite the sharply different distributions of utility across the two profiles. 

In contrast, our self-referential approach is self-contained: the context and its role in determining the aggregation criterion are jointly disciplined by the same axioms. Thus, all the ethical principles are explicit, transparent, and fully captured by the axioms themselves.
\end{description}
\end{quoting}

\section{Application: Ventilator allocation problem}\label{sec:application}

In this section, we provide a detailed treatment of the ventilator allocation application from \Cref{sec:motivation}. We formalize and prove the result that the optimal allocation policy is either efficient or prioritizes the worst-off, and that the comparison between the two hinges on whether the fraction of vulnerable agents surpasses a threshold. Our results justify and generalize the discussion in \Cref{sec:motivation} and the message in \autoref{fig:ventilators}.

Let $\mathcal L\coloneqq \{(a,b):a\in(0,1) \text{ and } 0<b\leq a \}$ denote the ``lower triangle'' of $(0,1)^2$. Given $(a,b)\in\mathcal L$, let $V(a,b)$ denote the unique fixed point of the mapping $v\mapsto F(v; a,b)$:
\[
V(a,b)=F\Big(V(a,b); a, b\Big).
\] The function $F$ was defined in \Cref{sec:motivation}. We have already established the existence and uniqueness of $V(a,b)$ for each $(a,b)\in \mathcal L$. For any policy $\vx$, its welfare level is $u_\Pi(\vx)=V(\bar\vx, \vx_{\min})$. Hence, two policies yielding the same mean and worst-off utilities yield the same aggregate welfare level. 

\begin{lemma}\label{lemma:monotonicity}
The fixed-point $V:\mathcal L\to (0,1)$ is continuous and strictly increasing.
\end{lemma}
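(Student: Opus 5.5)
We view $V$ as the zero of $G(v;a,b)\coloneqq v-F(v;a,b)=v-(1-\rho(v))a-\rho(v)b$ on $[0,1]$ and obtain both continuity and monotonicity from a single crossing argument. Fix $(a,b)\in\mathcal L$. At $v=0$ we have $G(0;a,b)=-a<0$, since $\rho(0)=0$ and $a>0$. At $v=1$ we have $G(1;a,b)=1-b>0$, since $\rho(1)=1$ and $b\le a<1$. Write $v^*=V(a,b)$ for the unique zero, whose existence and uniqueness were established in \Cref{sec:motivation}. Since $G(\cdot;a,b)$ is continuous, negative at $0$, positive at $1$, and has exactly one zero, it satisfies $G(v;a,b)<0$ for $v<v^*$ and $G(v;a,b)>0$ for $v>v^*$. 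This sign pattern is the only structural fact we will use; in particular we never need differentiability of $V$. Strict monotonicity of $\rho$ enters only through the uniqueness already established.

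\emph{Monotonicity.} Note that $\partial G/\partial a=-(1-\rho(v))\le 0$ and $\partial G/\partial b=-\rho(v)\le 0$. Take $(a',b')\in\mathcal L$ with $(a',b')\ge(a,b)$ and $(a',b')\neq(a,b)$, and evaluate at $v^*=V(a,b)\in(0,1)$. Because $0<\rho(v^*)<1$, both coefficients are strictly positive, so
\[
G(v^*;a',b')=G(v^*;a,b)-(1-\rho(v^*))(a'-a)-\rho(v^*)(b'-b)<0 .
\]
By the sign pattern applied to $(a',b')$, this forces $V(a',b')>v^*$. Hence $V$ is strictly increasing in the coordinatewise order. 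Here the fact that $V(a,b)$ lies strictly inside $(0,1)$ matters: it ensures each coordinate separately has a strict effect.

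\emph{Continuity.} Let $(a_k,b_k)\to(a,b)$ in $\mathcal L$ and fix $\epsilon>0$ small. The sign pattern gives $G(v^*-\epsilon;a,b)<0<G(v^*+\epsilon;a,b)$. Since $G$ is jointly continuous in $(v,a,b)$, for large $k$ we also have $G(v^*-\epsilon;a_k,b_k)<0<G(v^*+\epsilon;a_k,b_k)$. By the intermediate value theorem and uniqueness of the zero for $(a_k,b_k)$, this yields $|V(a_k,b_k)-v^*|<\epsilon$. The same argument also shows that $V$ maps into $(0,1)$, which is the codomain asserted in \autoref{lemma:monotonicity}.

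The main obstacle is that uniqueness is used as a black box, and the sign pattern depends on it. If uniqueness were in doubt, I would re-derive it from the observation that $F(\cdot;a,b)$ is nonincreasing in $v$ whenever $b\le a$: we have $F=a-\rho(v)(a-b)$ and $\rho$ is increasing. Then $G$ is strictly increasing in $v$, which yields both uniqueness and the sign pattern directly.
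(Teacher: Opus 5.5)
Your proof is correct and follows essentially the same route as the paper: monotonicity comes from the pointwise shift of $F$ combined with the single-crossing property of $v-F(v;a,b)$, which the paper uses implicitly through $F$ being decreasing in $v$; for continuity, the paper shows that every cluster point of $V(a^k,b^k)$ is a fixed point of the limit map and hence equals $V(a,b)$ by uniqueness, while you use an equivalent intermediate-value bracketing. One small correction to your remark about where the hypotheses enter: strict monotonicity of $\rho$ is what gives $0<\rho(v^*)<1$ in your monotonicity step (interiority of $v^*$ alone would not), whereas uniqueness only needs $\rho$ weakly increasing, so your remark has this backwards.
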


In \Cref{sec:motivation}, we noted that it is without loss of generality to focus on the efficient policy $\vx^E$ (optimal  under utilitarianism) and the fair policy $\vx^F$ (optimal under Rawlsianism). The following lemma formalizes this claim by showing that any policy is welfare-dominated by either the efficient or fair policy.

\begin{lemma}\label{lemma:wlog}
For any policy $\vx$, $u_\Pi(\vx)\leq \max\{u_\Pi(\vx^E), u_\Pi(\vx^F)\}$. 
\end{lemma}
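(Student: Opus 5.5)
We use Lemma~\ref{lemma:monotonicity}. Since $u_\Pi(\vx)=V(\bar\vx,\vx_{\min})$ and $V$ is increasing in both arguments, it suffices to show the following: for any policy $\vx$ there is a policy in $\{\vx^E,\vx^F\}$ whose mean and minimum are both at least those of $\vx$, or whose value is otherwise at least $u_\Pi(\vx)$. First, we reduce to policies that use all ventilators. Treating an extra patient weakly raises both $\bar\vx$ and $\vx_{\min}$ (essential infimum), so by monotonicity of $V$ we may assume the policy treats a mass $k$ of patients. Any such policy is then summarized by $t_L\in[0,\min\{\alpha,k\}]$, the mass of treated type $L$ patients, with $t_H=k-t_L\le 1-\alpha$. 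Its mean is
\[
\bar\vx(t_L)=\alpha L+(1-\alpha)H+(\gamma-1)\big(t_L L+(k-t_L)H\big),
\]
which is affine and strictly decreasing in $t_L$. The efficient policy uses the smallest feasible $t_L$, namely $\max\{0,k-(1-\alpha)\}$, and the fair policy uses the largest, namely $\min\{\alpha,k\}$.

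Next, we compute the minimum as a function of $t_L$. If $t_L<\alpha$, some positive mass of $L$ patients is untreated, so $\vx_{\min}=L$. If $t_L=\alpha$ (possible only when $k\ge\alpha$), all $L$ patients are treated. In that case $\vx_{\min}=\min\{\gamma L,H\}$ if some $H$ patients remain untreated, and $\vx_{\min}=\gamma L$ if all are treated; either way this is at least $L$. Hence $\vx_{\min}(t_L)$ equals $L$ on $[t_L^E,\alpha)$ and jumps weakly upward only at $t_L=\alpha$, which is the fair policy whenever $k\ge\alpha$.

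We conclude with a case split. Take any feasible full-use policy $\vx$ with parameter $t_L$.

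If $t_L<\min\{\alpha,k\}$ or $k<\alpha$, then $\vx_{\min}=L\le\vx^E_{\min}$, because the efficient policy's minimum is at least $L$. Also $\bar\vx\le\bar\vx^E$, since $t_L\ge t_L^E$. So $u_\Pi(\vx)\le u_\Pi(\vx^E)$.

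Otherwise $t_L=\alpha\le k$, so $\vx$ coincides with $\vx^F$ in terms of $(\bar\vx,\vx_{\min})$, and the two give equal welfare.

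The only delicate points are the following:
\begin{itemize}
\item One is the reduction to full use, which needs $V$ monotone on the closed region including ties. Lemma~\ref{lemma:monotonicity} gives this.
\item The other is handling the measure-zero subtleties in the essential infimum. A policy treating all but a null set of $L$ patients counts as $t_L=\alpha$, because $\vx$ is only defined up to null sets in the essential infimum.
\end{itemize}
I expect the main obstacle to be the bookkeeping of the boundary cases for $\vx^E_{\min}$, for instance when $k>1-\alpha$ forces the efficient policy to also treat all $L$ patients. This is harmless, since we only use $\vx^E_{\min}\ge L$.
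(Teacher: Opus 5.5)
Your proof is correct and follows the paper's argument. You split on whether all type $L$ patients are treated, compare with $\vx^E$ (same minimum $L$, weakly higher mean) in the first case and with $\vx^F$ (same minimum $\min\{\gamma L,H\}$, weakly higher mean) in the second, and conclude with Lemma~\ref{lemma:monotonicity}. Your full-use reduction and $t_L$ parametrization simply make explicit the mean-maximization facts the paper takes for granted.

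One harmless slip: since $k<1$, when $k>1-\alpha$ the efficient policy treats only $k-(1-\alpha)<\alpha$ type $L$ patients. So it never treats all of them, and in fact $\vx^E_{\min}=L$ always.
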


The ventilator allocation problem thus simplifies to choosing between the efficient and fair policies. The main result of this section shows that there exists a threshold determining whether $\vx^E$ or $\vx^F$ is optimal.

\begin{proposition}\label{prop:optimal threshold}
There exists a continuous and strictly increasing function $k\mapsto \alpha^*(k)$ with $\alpha^*(k)\leq k$ such that, given ventilator supply $k\in (0,1)$ and fraction $\alpha\in (0,1)$ of vulnerable agents, the welfare-maximizing policy is $\vx^F$ if $\alpha<\alpha^*(k)$ and $\vx^E$ if $\alpha >\alpha^*(k)$.
\end{proposition}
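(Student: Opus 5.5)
The plan is to reduce everything to comparing two points of the lower triangle $\mathcal L$ through the fixed-point map $V$, using \Cref{lemma:wlog} to restrict attention to $\vx^E$ and $\vx^F$ and \Cref{lemma:monotonicity} to translate statements about $(\bar\vx,\vx_{\min})$ into welfare comparisons.

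\textbf{Step 1: closed forms.} First I will write the means and minima explicitly as functions of $(\alpha,k)$. Set $B(\alpha)=\alpha L+(1-\alpha)H$. Since $k<1$, some patient always remains untreated.
\begin{itemize}
\item For the efficient policy, $\vx^E_{\min}=L$ always. Its mean is $\bar\vx^E=B(\alpha)+(\gamma-1)\bigl[\min\{k,1-\alpha\}H+(k-\min\{k,1-\alpha\})L\bigr]$.
\item For the fair policy, $\vx^F_{\min}=L$ if $k<\alpha$ and $\vx^F_{\min}=m\coloneqq\min\{\gamma L,H\}>L$ if $k\ge\alpha$. Its mean is $\bar\vx^F=B(\alpha)+(\gamma-1)\bigl[\min\{k,\alpha\}L+(k-\min\{k,\alpha\})H\bigr]$.
\end{itemize}
Comparing the two means gives $\bar\vx^E-\bar\vx^F=(\gamma-1)(H-L)\min\{k,\alpha,1-\alpha,1-k\}>0$.

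\textbf{Step 2: the region $\alpha>k$.} Here both minima equal $L$ and $\bar\vx^E>\bar\vx^F$. Strict monotonicity of $V$ therefore makes $\vx^E$ strictly better. This is why the threshold will satisfy $\alpha^*(k)\le k$.

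\textbf{Step 3: the region $\alpha<k$.} Define
\[
D(\alpha,k)\coloneqq V(\bar\vx^E,L)-V(\bar\vx^F,m).
\]
As $\alpha\downarrow 0$ the mean gap vanishes while the minima differ by $m-L>0$. By continuity and strict monotonicity of $V$ (\Cref{lemma:monotonicity}), $D<0$ near $\alpha=0$, so $\vx^F$ is optimal there. I then set
\[
\alpha^*(k)\coloneqq\sup\{\alpha<k: D(\alpha,k)<0\},
\]
which is possibly equal to $k$, matching the segments $\alpha^*(k)=k$ in \autoref{fig:ventilators}.

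\textbf{Step 4: single crossing in $\alpha$ (main obstacle).} To make $\alpha^*$ a genuine threshold, I need $D(\cdot,k)$ to be strictly increasing on $(0,k)$, or at least to cross zero only once from below.
\begin{itemize}
\item Implicit differentiation of $v=(1-\rho(v))a+\rho(v)b$ gives $V_a=(1-\rho)/(1-\rho'(a-b))$ and $V_b=\rho/(1-\rho'(a-b))$, both evaluated at $v=V$. The marginal rate of substitution of the minimum for the mean is therefore $\rho(V)/(1-\rho(V))$, which is increasing in $V$: lower welfare means a more utilitarian trade-off.
\item From Step 1, raising $\alpha$ lowers $\bar\vx^F$ at rate $\gamma(H-L)$. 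It lowers $\bar\vx^E$ at rate $(H-L)$ when $k<1-\alpha$ and at rate $\gamma(H-L)$ otherwise. The minima do not change with $\alpha$.
\item In the first sub-case the mean gap grows in $\alpha$. I will try to show $\partial_\alpha D>0$ directly, by comparing $V_a$ at the two points and using that $\vx^F$ has the larger minimum.
\item In the second sub-case the mean gap is constant while both means fall. I expect to argue that at a crossing point ($D=0$, common welfare $v$) the two points lie on the same iso-welfare curve $\{(a,b):F(v;a,b)=v\}$, which is a straight line with slope $-(1-\rho(v))/\rho(v)$. Shifting both points down in $a$ by equal amounts moves them to lower-welfare curves, which are flatter in the mean direction. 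This favours the point with the higher mean, namely $\vx^E$, which yields single crossing.
\end{itemize}
I expect this geometric comparison, made rigorous through the derivatives above or through a direct comparison of fixed points using that $\rho$ is increasing, to be the hardest part. It needs care at the kink $\alpha=1-k$ and, where smoothness fails, should be handled one-sidedly.

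\textbf{Step 5: properties of $\alpha^*$.} Finally I will show that $D$ is strictly decreasing in $k$ on $\{\alpha<k\}$. The idea is that raising $k$ adds treatment value to both policies, but by Step 1 the mean gap shrinks once $k>1-\alpha$ and otherwise raises the welfare level, which increases the weight on the minimum. This uses the same marginal-rate-of-substitution argument as Step 4. Together with joint continuity of $D$, inherited from \Cref{lemma:monotonicity}, the implicit-function or intermediate-value argument then gives that $\alpha^*$ is continuous. It is strictly increasing wherever $\alpha^*(k)<k$, and equals $k$ elsewhere, which is still strictly increasing.
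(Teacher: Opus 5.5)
Your closed forms, the treatment of $\alpha>k$, and Step~3 are correct. Step~3 even supplies the sign change near $\alpha=0$ that the paper's intermediate-value step tacitly needs. Your overall architecture is the paper's.

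The gap is in Steps~4--5, where you set out to prove monotonicity of the welfare difference $D$ itself. That is false away from the zero set of $D$. Take $\rho(v)=v$, so $V(a,b)=a/(1+a-b)$, with $H=0.5$, $L=0.2$, $\gamma=2$. Then $D(0.01,0.85)\approx-0.0696<D(0.01,0.90)\approx-0.0693$, so $D$ is not decreasing in $k$. With the same $\rho$ and $\gamma=1.1$, $H=0.9$, $L=0.8$, $k=0.5$, one has $\partial_\alpha D<0$ near $\alpha=0$. This defeats your plan of showing $\partial_\alpha D>0$ directly in the first sub-case. The reason is that when the two welfare levels differ, the partials $V_a$ at the two points are evaluated at different welfare levels, and the factor $1-\rho(\cdot)$ can dominate.

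Your implicit derivative also has a sign error. Differentiating $v=a-\rho(v)(a-b)$ gives $V_a=(1-\rho)/\bigl(1+\rho'(a-b)\bigr)$ and $V_b=\rho/\bigl(1+\rho'(a-b)\bigr)$. The ratio $\rho/(1-\rho)$ survives, but the comparison that drives any derivative argument at a crossing does not. Since $\bar\vx^E-L>\bar\vx^F-m$, the correct formula gives $V_a(\bar\vx^F,m)>V_a(\bar\vx^E,L)$ at a common welfare level. Your denominator $1-\rho'(a-b)$ reverses this. It would deliver $\partial_\alpha D<0$ when $k>1-\alpha$ and $\partial_k D>0$ when $k<1-\alpha$, exactly the wrong signs.

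What is true, and all you need, is that the \emph{sign} of $D$ is monotone. Write $v^E\coloneqq V(\bar\vx^E,L)$ and $v^F\coloneqq V(\bar\vx^F,m)$. The paper works with
\[
G(k,\alpha)\coloneqq\bigl(1-\rho(v^F)\bigr)\bigl(\bar\vx^E-\bar\vx^F\bigr)-\rho(v^F)\,(m-L)=F\bigl(v^F;\bar\vx^E,L\bigr)-v^F .
\]
$G$ has the sign of $v^E-v^F$ because $v\mapsto F(v;a,b)-v$ is strictly decreasing.

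On $\{\alpha\le k\}$, your Step~1 gives $\bar\vx^E-\bar\vx^F=(\gamma-1)(H-L)\min\{\alpha,1-k\}>0$. This gap is weakly increasing in $\alpha$ and weakly decreasing in $k$. Meanwhile $v^F$ is strictly decreasing in $\alpha$ and strictly increasing in $k$. Hence $G$ is strictly increasing in $\alpha$ and strictly decreasing in $k$. That yields single crossing, continuity, and strict monotonicity of $\alpha^*$, with no differentiability or kink issues at $\alpha=1-k$.

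This $G$ is exactly your ``direct comparison of fixed points,'' carried out at the level $v^F$. The repair is to run Steps~4--5 on $G$ instead of $D$. If you keep derivatives, evaluate them only at zeros of $D$ and with the corrected denominator, as the paper does in the $\alpha$-direction.
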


\section{Additional applications} 
While the application to healthcare triage illustrates a monotone increasing fan, other contexts call for the reverse. We discuss instances where a monotone decreasing fan is applicable: specifically, where a decrease in overall well-being triggers policy changes that favor the most vulnerable populations. Finally, in \Cref{sec:katrina} we briefly mention some additional episodes of ethical shifts in triage decisions that are captured by a monotone increasing fan.

\subsection{Educational spending}\label{sec:education}
Economic downturns typically threaten educational equity by shrinking the tax bases that fund public schools. However, policy responses are not uniform. In several documented cases, governments and districts have explicitly shifted priorities to insulate low-income and special-needs students from such shocks. Here we review some well-documented instances where priorities have shifted, and we connect them to the theory developed in the paper. In particular, it is possible to think of these shifts through the lens of the social welfare functions that we have proposed. 

First, consider the Indonesian \textit{Jaring Pengaman Sosial} (Social Safety Net) launched during the 1997 Asian Financial Crisis. With GDP contracting by over 13\%, the government faced a potential exodus of children from schools to the labor market. In response, the program allocated scholarships directly to the poorest students and sent block grants to schools in the poorest districts. As noted by \cite{Cameron2002}, the program explicitly prioritized retaining students from the lowest income quintiles. The approach was largely successful: \cite{Frankenberg1999} use longitudinal data to show that the safety net effectively smoothed consumption for poor families, rendering the decline in enrollment statistically insignificant despite the macroeconomic shock.

Second, the Latin American response to economic instability has often involved expanding Conditional Cash Transfer (CCT) programs. Programs like Brazil's \textit{Bolsa Família} and Mexico's \textit{Progresa} (later \textit{Oportunidades}) link CCT payments to school attendance. Rather than cutting these programs during economic downturns, governments have often expanded them \citep{blomquist2002social,Rawlings2005}. Research indicates that during the crises of the late 1990s and early 2000s, these programs acted as automatic stabilizers by maintaining the purchasing power of the poorest families, which ensured that children were not pulled out of school to work, a common coping mechanism during recessions in the region \citep{DeJanvry2006}.

Third, when US school districts faced both operational shutdowns and looming budget shortfalls during the COVID-19 pandemic, the American Rescue Plan (ARP) of 2021 included strict Maintenance of Equity provisions for the Elementary and Secondary School Emergency Relief (ESSER) fund. States were legally prohibited from cutting per-pupil spending in high-poverty schools at a rate faster than in the rest of the district \citep{USDeptEd2021}. Furthermore, the statute required districts to reserve at least 20\% of funds to address learning loss via interventions explicitly targeting underrepresented subgroups, including low-income families, children with disabilities, and English learners \citep{Jordan2021}. This forced a reprioritization toward the most vulnerable.\footnote{This episode is in contrast with the responses during the 2008 recession in the US, where budget cuts involved deprioritizing the worst-off students.}

These examples reveal a planner that seeks to mitigate the suffering of the worst-off agents in times of crisis. Such a preference is consistent with \owhps, which corresponds to a monotonic decreasing fan: as aggregate welfare $v$ falls, the set of welfare weights $\Pi(v)$ expands toward the edges of the simplex, forcing the planner to resolve trade-offs closer to the Rawlsian ideal precisely when resources are most scarce.

\subsection{Income distribution}\label{sec:incomeineq}

Measures of income inequality are often rooted in social welfare functions. The view that income inequality should be assessed through a normative framework dates back to \cite{daltonEJ1920}, who argued that inequality measures must be derived from an explicit aggregation procedure.\footnote{This contrasts with the ``Italian school'' (notably Corrado Gini), which favored statistical measures of dispersion without explicit welfare foundations. See \cite{atkinson2015} for an overview.} \cite{atkinson1970measurement} formalized this link, demonstrating that any inequality measure that is monotone in the Lorenz dominance order is implicitly a utilitarian aggregation of a concave utility function. Specifically, Atkinson's influential index is derived from additively aggregating utilities with constant relative risk aversion.

To formalize the application to income inequality, let $\vx \in \R^n$ denote an income distribution where $x_i$ is the income of agent $i$. We evaluate outcomes using a strictly concave function $\phi$, where the concavity reflects the degree of inequality aversion. The Atkinson approach identifies the \textit{Equally Distributed Equivalent} (EDE) income $y$, which is the income level that leaves a utilitarian planner indifferent between all agents earning $y$ and the income distribution $\vx$. The EDE is defined implicitly by:
\[
\phi(y) = \frac{1}{n}\sum_{i} \phi(\vx_i) \implies y = \phi^{-1}\left(\frac{1}{n}\sum_{i}\phi(\vx_i)\right).
\]
The Atkinson index corresponds to an income distribution's EDE for the special case when $\phi(z)=z^{1-\epsilon}/(1-\epsilon)$, where $\epsilon \geq 0$ (with $\epsilon\neq 1$) is the parameter of inequality aversion \citep{atkinson1970measurement}. When $\epsilon=0$, each agent's income  is weighted equally; as $\epsilon \to \infty$, increasingly more and more weight is placed on the poorest agent. The EDE is a form of certainty equivalent, and it also features in our formalization of the Fan social welfare function (see the definition of the representing social welfare function). 

Notice that Atkinson's measure, like most inequality indices used in practice, is homothetic: the relative inequality between $\vx$ and $\vy$ must be the same as that between $\lambda\vx$ and $\lambda\vy$ for any $\lambda>0$. Homotheticity is, however, questionable in this setting. The problem is clearly articulated by \cite{KOLM1976416}, who writes, ``In May 1968 in France, radical students triggered a student upheaval which induced a workers' general strike. All this was ended by the Grenelle agreements, which decreed a 13\% increase in all payrolls. Thus, laborers earning 80 pounds a month received $10$ pounds more, whereas executives who already earned $800$ pounds a month received $100$ pounds more. The Radicals felt bitter and cheated; in their view, this widely increased income inequality.'' \cite{sen1973economic} follows up on these ideas and argues that the absolute value of income should be taken into account in evaluating any transfer that is meant to mitigate income inequality.

The idea at the heart of these arguments is that concerns about inequality may depend on overall income levels. The recent study by \cite{costafont2025specific} documents that individuals' aversion to inequality decreases with income. Society may, therefore, arguably, have a greater concern for inequality when agents are overall poorer than when they are better off. An income distribution in which each of two agents earns \$50,000 may be preferable to a distribution in which one earns \$95,000 while the other earns \$25,000: the reduction of aggregate income by \$20,000 may be an acceptable sacrifice for the guarantee that neither agent falls too close to the poverty line. However, the same sacrifice may be unpalatable when neither agent is poor, so we may prefer a distribution in which one earns \$190,000 while the other earns \$50,000 to a distribution in which both earn \$100,000.

The Fan social welfare function accommodates these concerns by relaxing homotheticity. By imposing only \owhps, we generate a monotone decreasing fan. This implies that inequality aversion increases as society becomes poorer. Consequently, a planner may tolerate inequality in a rich society (behaving closer to a utilitarian) but demand strictly egalitarian outcomes in a poor society (approaching the Rawlsian ideal). Importantly, unlike other proposals to weaken homotheticity (see, for example, \cite{KOLM1976416}, \cite{KOLM197682}, and \cite{blackorby1980}), our approach leverages the self-referential property of the Fan social welfare function. Hence, the pivot from greater to less concern for inequality is endogenously determined by the level of societal wealth itself.

\subsection{More on triage}\label{sec:katrina}

While our discussion of triage has emphasized ventilator allocation during the COVID-19 pandemic, this is not the only relevant application of Crisis Standards of Care. A stark example of the switch from Rawlsianism to utilitarianism occurred at New Orleans' Memorial Medical Center during Hurricane Katrina in 2005.\footnote{For an account of the Memorial Medical Center episode during Katrina, see the Pulitzer Prize-winning story in ProPublica (\url{https://www.propublica.org/article/the-deadly-choices-at-memorial-826}), published in the New York Times Magazine on August 25, 2009.}

Isolated by floodwaters and facing total power failure, hospital staff were forced to improvise triage protocols. The resulting evacuation procedure explicitly deprioritized the worst-off. Patients were categorized into 1s (ambulatory), 2s (needing assistance), or 3s (critically ill). In a reversal of normal standards, 1s were evacuated first to maximize the number of survivors. Conversely, 3s, many with Do Not Resuscitate orders, were evacuated last. The hospital staff reasoned that patients categorized as 3s had the ``least to lose'' based on long-term prognosis, a utilitarian criterion typically impermissible in normal clinical settings. These ad-hoc decisions have since been formalized in Crisis Standards of Care guidelines, which explicitly acknowledge that a drop in aggregate capabilities necessitates a shift from patient-centered equity toward population-level efficacy.

This episode vividly illustrates that ethical trade-offs depend on population-wide conditions, which our model formalizes through aggregate welfare, reflecting the broader context in which these decisions are made. As aggregate welfare collapses, the decision criterion pivots toward efficiency to do the most good, even at the expense of the neediest. Our \owhms\ axiom and the resulting class of increasing Fan social welfare functions precisely capture such welfare-dependent trade-offs between utilitarianism and Rawlsianism, providing a formal rationalization for Crisis Standards of Care.

\section{Proofs}\label{sec:proofs}

The proofs of \autoref{thm:monincfan} and \autoref{thm:main} have several points in common. We first prove \autoref{thm:welldef} and \autoref{thm:main}. Then we turn to \autoref{thm:monincfan}. This section concludes with the proofs for the results in \Cref{sec:application}.

\subsection{Proof of \autoref{thm:welldef}}
Suppose that $\Pi$ satisfies the hypotheses of the theorem. For $v\in\R$, let
\[
m_v(\vx) \coloneqq \min\{\vpi\cdot \vx \mid \vpi\in \Pi(v) \},
\]
and let
\[
u(\vx)\coloneqq \inf \{v  \mid m_v(\vx) \leq v \},
\]
where we have suppressed the dependence of $u$ on $\Pi$ to simplify notation. We shall first show that $u(\vx) = \inf \{v  \mid m_v(\vx) = v \}$.

\begin{lemma}\label{lemma:lemma3} For any $\vx\in\Re^n_+$, $u(\vx)\in\Re$ is well defined, and $m_{u(\vx)}(\vx) = u(\vx)$.
\end{lemma}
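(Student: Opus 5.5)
Throughout, $\Pi$ is a monotone decreasing fan (the hypothesis of \autoref{thm:welldef}). The plan is a Tarski-style argument: the map $v\mapsto m_v(\vx)$ is monotone because of the decreasing fan, the set $S\coloneqq\{v\mid m_v(\vx)\leq v\}$ is an upper ray, and its infimum is a fixed point. Note that $\Pi(v)$ is only defined for $v\in\Re_+$, so we read $u$ with $v$ ranging over $\Re_+$.

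\textbf{Monotonicity and nonemptiness.} If $v\leq v'$, then $\Pi(v')\subseteq\Pi(v)$. Minimizing over the smaller set gives $m_v(\vx)\leq m_{v'}(\vx)$, so $v\mapsto m_v(\vx)$ is nondecreasing. Since $\Pi(v)\subseteq\Delta$ is nonempty and closed (hence compact), the minimum is attained, and $0\leq m_v(\vx)\leq \max_i \vx_i$ for every $v$. Hence $S$ contains every $v\geq\max_i\vx_i$, so $S\neq\emptyset$. Also $S\subseteq\Re_+$ is bounded below by $0$. Therefore $u(\vx)\coloneqq\inf S$ lies in $[0,\max_i\vx_i]\subseteq\Re$.

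\textbf{Fixed point.} Let $u=u(\vx)$.

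First, suppose $m_u(\vx)>u$. For any $v\in S$ we have $v\geq u$, so monotonicity gives $v\geq m_v(\vx)\geq m_u(\vx)$. Taking the infimum over $v\in S$ yields $u\geq m_u(\vx)$, a contradiction. Hence $m_u(\vx)\leq u$, so in particular $u\in S$.

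Second, suppose $m_u(\vx)<u$. Then $u>0$. Put $w\coloneqq m_u(\vx)\in[0,u)$. Since $w\leq u$, the fan property gives $\Pi(u)\subseteq\Pi(w)$, so $m_w(\vx)\leq m_u(\vx)=w$. Thus $w\in S$ with $w<u=\inf S$, a contradiction. Hence $m_{u}(\vx)=u$.

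It follows that $u(\vx)$ belongs to $\{v\mid m_v(\vx)=v\}$, which is a subset of $S$. So $u(\vx)=\inf\{v\mid m_v(\vx)=v\}$, matching the definition of $u_\Pi$.

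\textbf{Main obstacle.} The delicate step is the first half of the fixed-point argument, showing $m_u(\vx)\leq u$ without any continuity of $\Pi$. Upper hemicontinuity is not assumed in \autoref{thm:welldef}. The argument avoids it entirely: it bounds $m_u(\vx)$ by $m_v(\vx)$ for each $v\in S$ using only monotonicity, exactly as in Tarski's theorem. The only care needed is about the domain, namely that $0$ is a lower bound and $\Pi(\cdot)$ is evaluated only at nonnegative arguments, which holds because $w=m_u(\vx)\geq0$.
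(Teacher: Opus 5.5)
Your proof is correct and follows essentially the paper's Tarski-style argument: the decreasing fan makes $v\mapsto m_v(\vx)$ nondecreasing, the set $\{v \mid m_v(\vx)\leq v\}$ is nonempty because $m_v(\vx)\leq \max_i \vx_i$, and your derivation of $m_{u}(\vx)\leq u$ matches the paper's. The only variation is the reverse inequality: the paper uses $v<m_v(\vx)\leq m_{v^*}(\vx)$ for every $v<v^*$ and passes to the limit, whereas you feed $w=m_u(\vx)$ back into the map to show $w$ lies in the set (the textbook Tarski step), which is equally valid and avoids the limiting argument.
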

\begin{proof}This proof follows roughly the proof of Tarski's fixed point theorem. 
Note first that $m_0(\vx)\geq 0$ and that for any $v\geq 0$, $m_v(\vx)\leq \max\{\vpi\cdot \vx:\vpi\in\Delta \}$. Hence for $v$ large enough ($v> \max\{\vpi\cdot \vx:\vpi\in\Delta \}$) $m_v(\vx)< v$. So the set $F_\vx \coloneqq \{v \mid m_v(\vx) \leq v \}$ is nonempty. 

Let $v^*\coloneqq \inf F_\vx$ and consider an arbitrary $v\in F_\vx$. Then $v^*\leq v$ implies that $\Pi(v^*)\supseteq \Pi(v)$, and therefore that $m_{v^*}(\vx)\leq m_{v}(\vx)\leq v$. Thus $m_{v^*}(\vx)$ is a lower bound on $F_\vx$ and we conclude that $m_{v^*}(\vx)\leq v^*$. Hence,  $v^*\in F_\vx$.

Also, for any $v<v^*$ we have $v<m_v(\vx)\leq m_{v^*}(\vx)$ as $v\notin F_\vx$ and $\Pi(v)\supseteq \Pi(v^*)$. Since this is true for arbitrary $v<v^*$ we conclude that $v^*\leq m_{v^*}(\vx)$. So $u(\vx)=v^*=m_{v^*}(\vx)$. 
\end{proof}

Clearly,  $u(\vx)\leq \inf \{v  \mid m_v(\vx) = v \}$. However, we established in \autoref{lemma:lemma3} that $u(\vx)\in \{v  \mid m_v(\vx) = v \}$. We therefore obtain $u(\vx)=\inf \{v  \mid m_v(\vx) = v \}$ as in our definition of the Fan social welfare function.
\medskip

Now we proceed to prove the rest of the statement. First, we claim that $\vx\leq \vy$ implies that $u(\vx)\leq u(\vy)$. For any $v\in F_\vy$, the monotonicity of $m_v$ implies that $m_v(\vx) \leq m_v(\vy) \leq v$. So $F_\vy\subseteq F_\vx$ and hence $u(\vx)\leq u(\vy)$.

Second, we show that $u(\vc)=c$ for any $c\in\R_+$. This property is useful in the sequel. For any $v$, $m_v(\vc)=c$. So $F_{\vc}=\{v \mid c\leq v \}$, and therefore $u(\vc)=\inf F_{\vc} = c$.

Third, $u(\vx)\geq u(\vc)$ and $u(\vy)\geq u(\vc)$ imply that $u(\al \vx + (1-\al) \vy)\geq u(\vc)$ for any $\al\in (0,1)$. To see this, fix  $v<c$.  Then $v<m_v(\vx)$ and $v<m_v(\vy)$, as $u(\vx)\geq u(\vc)=c$ and $u(\vy)\geq u(\vc)=c$. Let $\al\in (0,1)$. By the concavity of $m_v$ (which is the infimum of a collection of affine functions, and hence concave),
\[
m_v(\al \vx + (1-\al) \vy)\geq \al m_v(\vx) + (1-\al) m_v(\vy)>v.
\] Hence $v\notin F_{\al \vx + (1-\al) \vy}$ for any $v<c$ and thus $u(\al \vx + (1-\al) \vy)\geq c$. 

Fourth, we show that $u(\vx)> u(\vc)$ and $\la>1$ implies $u(\la \vx)> u(\la \vc)$. Fix $v\leq c$ and $\la>1$. Then $m_v(\vx)> v$ as $u(\vx)> u(\vc)=c$. 
Now, since $\la>1$ we have $\Pi(v)\supseteq \Pi(\la v)$, and thus $m_v(\la \vx)\leq m_{\la v}(\la \vx)$. Then
\[
m_{\la v}(\la \vx)\geq  m_v(\la \vx) = \la m_v(\vx)>\la v,
\] where we have used the homogeneity of $m_v$.
Thus $\la v\notin F_{\la \vx}$ for any $v\leq c$. Equivalently, for any $v'\leq \la c$ we can write $v'=\la v$ with $v\leq c$ and conclude that $v'\notin F_{\la \vx}$. Thus $u(\la \vx)> \la c$.

Finally, we claim that $u(\vx) =  u(\vc)$ and $\al\in (0,1)$ imply $u(\al \vx + (1-\al) \vc) = u(\vc)$. To prove this, recall that we have already established convexity, so $u(\al \vx + (1-\al) c) \geq u(\vc)$. Suppose, towards a contradiction, that $u(\al \vx + (1-\al) c) > u(\vc)$. Then $c=u(\vc)$ means that $c\notin F_{\al \vx + (1-\al) c}$. 
Observe that $m_v(\al \vx + (1-\al) c) = \al m_v(\vx) + (1-\al) c$ for any $v$ because if $\vpi\in \Pi(v)$ then $\vpi\cdot \al \vx + (1-\al) c = \al \vpi\cdot \vx + (1-\al) c$. 
Then
\[
c < m_c(\al \vx + (1-\al) c) = \al m_c(\vx) + (1-\al) c.
\]
So $c< m_c(\vx)$, which contradicts the fact that $u(\vx)=u(\vc)=c$. 
Hence, $u(\vx) =  u(\vc)$ implies that $u(\al \vx + (1-\al) \vc) = u(\vc)$.

To prove the last statement in the theorem, note that when $\Pi(0)=\Delta$, for any $\vx\in\Re^n_+\setminus\Re^n_{++}$ we have $m_0(\vx)=0$. Hence $u(\vx) = \inf \{v  \mid m_v(\vx) \leq v \}=0$.

\subsection{Proof of \autoref{thm:main}}

\subsubsection{Sufficiency}
We first show that a preference $\succeq$ that satisfies monotonicity, Inada, weak continuity, convexity, mixing invariance, and \owhps has a utility representation. To that end, for each $\vx\in\R^n_+$, the set $\{c\in\Re:\vc\succeq \vx \}$ is nonempty by monotonicity. Let $e_\vx$ be the infimum of this set. The scalar $e_{\vx}$ is an \df{egalitarian certainty equivalent} of $\vx$. By the weak continuity axiom, $\ve_\vx\succeq \vx$. If $\ve_\vx\succ \vx$, then again by weak continuity there exists $c'<e_\vx$ with $\vc'\succ \vx$, which contradicts the fact that $e_\vx$ is the infimum. Hence, $\ve_\vx\sim \vx$. By monotonicity, $e_\vx$ is unique. Note that $\vx\mapsto e_\vx$ is a utility representation of $\succeq$ because of monotonicity.
\smallskip

Next, we show that the axioms imply the existence of a monotone decreasing and upper hemicontinuous fan $\Pi$. For each $c$, define $U^\circ(c)\coloneqq\{\vx\in\Re^n_+ \mid \vx\succ \vc \}$. Note that $U^\circ(c)$ is convex and nonempty. Furthermore, $U^\circ(c)$ is open by the continuity axiom. Thus, there is a hyperplane properly separating $U^\circ(c)$ and $\{\vc\}$, i.e., there is a non-zero vector $\vpi\in\R^n$ such that $\vpi\cdot U^\circ(c)\geq \vpi\cdot \vc$ and $\vpi\cdot \vx> \vpi\cdot \vc$ for some $\vx\in U^\circ(c)$. By monotonicity of $\succeq$, we obtain that $\vpi>0$. We may then normalize $\vpi$ so that $\vpi\in\Delta$, which implies $\vpi\cdot \vc=c$. 

For each $c\geq 0$, define $\Pi(c)\coloneqq\{\vpi\in\Delta:\vpi\cdot \vx\geq c \text{ for all } \vx\in U^\circ(c)\}$. Observe that $\Pi(c)$ is nonempty, closed and convex. We proceed to prove the two properties required of $c\mapsto\Pi(c)$:
\smallskip

\noindent\textbf{Monotone decreasing:} Consider the case where $c=0$. By the Inada and monotonicity axioms, $U^\circ(0)=\R^n_{++}$. For any $\vx\in\R^n_{++}$ and any $\vpi\in\Delta$, we have $\vpi\cdot\vx\geq 0$. Hence, $\Pi(0)=\Delta$, and $\Pi(c')\subseteq \Pi(0)$ for all $c'>0$ trivially. 

Next, consider the case where $c>0$. Take any $c'>c$ and choose any $\vpi\in\Pi(c')$. If $\vx\in U^\circ(c)$ then $(c'/c)\vx\succ (c'/c) \vc=\vc'$ by \owhps. Thus,
$(c'/c)\vx\in U^\circ(c')$ and $\vpi\cdot [(c'/c)\vx]\geq c'$. Hence $\vpi \cdot \vx\geq c$. Since $\vx\in U^\circ(c)$ was arbitrary, we have $\vpi\in\Pi(c)$.
\smallskip

\noindent\textbf{Upper hemicontinuity:}  Let $c^k\to c$ and choose a convergent sequence $\vpi^k\in\Pi(c^k)$, with $\vpi=\lim \vpi^k$. For any $\vx\succ \vc$, we have $e_\vx>c$. For large enough $k$, we then have  $e_\vx> c^k$, implying $\vx\succ \vc^k$. Thus $\vpi^k\cdot \vx\geq c^k$ for $k$ large enough. Consequently, $\vpi\cdot \vx\geq c$. Since $\vx\in U^\circ(c)$ was arbitrary, we conclude that $\vpi\in\Pi(c)$. 
\smallskip

We have thus far established the existence of a monotone decreasing and upper hemicontinuous $\Pi$. We now prove some properties of $\Pi$ that are useful to establish the desired utility representation. 

\Claim For any $\vx\succ \vc$, we have that $\vpi\cdot \vx> c$ for all $\vpi\in \Pi(c)$. To see why, note that $U^\circ(c)$ is open by the weak continuity axiom. Thus, we can find $\vx'\in U^\circ(c)$ with $\vx'\ll \vx$. Consequently, for all $\vpi\in \Pi(c)$, we have $\vpi\cdot \vx>\vpi\cdot \vx'\geq c$.

\smallskip

\Claim For any $\vx\sim \vc$, we have that $\vpi\cdot\vx\geq c$ for all $\vpi\in\Pi(c)$. Moreover, there exists a $\vpi\in\Pi(c)$ such that $\vpi\cdot \vx=c$.

To show the first part, suppose (for the sake of contradiction) that $\vpi\cdot\vx<c$ for some $\vpi\in\Pi(c)$. This is only possible if $c>0$. We can then choose $\vx'\gg\vx$ with $\vx'\in\R^n_+$ and $\vpi\cdot \vx'<c$. However, by the monotonicity axiom, $\vx'\succ\vx\sim \vc$, which implies that $\vx'\in U^\circ(c)$ and $\vpi\cdot\vx'\geq c$. A contradiction.

To show the second part, define $L\coloneqq\{\al \vx + (1-\al) \vc:\al\in [0,1] \}$. For any $\vz\in L$, we have $\vz\sim \vc$ by the mixing invariance axiom. Hence, $\vz\notin U^\circ(c)$, making $L$ and $U^\circ(c)$ disjoint convex sets. Again by the separating hyperplane theorem, there exists $\vpi_\vx\neq 0$ with $\vpi_\vx\cdot L\leq \vpi_\vx \cdot U^\circ(c)$. By the monotonicity of $\succeq$ we may also take $\vpi_\vx>0$. Hence, after a normalization, $\vpi_\vx\in\Delta$.

Note that if $\vpi_\vx\cdot \vx> c$ then we can find $c'>c$ with  $\vpi_\vx\cdot \vx> c'=\vpi_\vx\cdot \vc'$, which would be a contradiction as $\vc'\in U^\circ(c)$ while $\vx\in L$.  Note also that $\vpi_\vx\cdot \vx< c$ leads to a contradiction because we may choose $\vx'\gg \vx$ with $\vpi_\vx\cdot \vx'< c$; again a contradiction as monotonicity implies $\vx'\succ \vx$ and thus by transitivity $\vx'\in U^\circ(c)$ while $\vc\in L$. We thus conclude that $\vpi_\vx\cdot \vx=c$. Then for all $\vz\in U^\circ(c)$,  $\vpi_\vx\cdot \vz\geq \vpi_\vx\cdot \vx=c$, which by definition of $\Pi(c)$ implies that $\vpi_\vx\in\Pi(c)$. 
\smallskip


We now establish that $\succeq$ has a utility representation of the form $u_\Pi$. Let $u(\vx)=e_\vx$. We claim that $u(\vx) = \min\{v  \mid m_v(\vx)\leq v\}$. Since $\vx\sim e_\vx$,  we have $\vpi\cdot \vx\geq \vpi\cdot \ve_\vx=e_\vx$ for all $\vpi\in \Pi(e_\vx)$, with equality for some $\vpi\in\Pi(e_\vx)$ as identified in the previous argument. Thus $e_\vx = m_{e_\vx}(\vx)$.

Let $v$ be such that $m_v(\vx) \leq v$. 
Suppose, towards a contradiction, that $v<e_\vx$. Then $\Pi(v)\supseteq \Pi(e_\vx)$ implies $m_v(\vx)\leq m_{e_\vx}(\vx)$. 
But $v<e_\vx$ means $\vx\in U^\circ(v)$ and hence $\vpi\cdot \vx>v$ for all $\vpi\in\Pi(v)$. Since $\Pi(v)$ is compact, $m_v(\vx)> v$; a contradiction. This implies that $u(\vx) = \min\{v  \mid m_v(\vx)\leq v\}$, and therefore that $u=u_{\Pi}$.


\subsubsection{Necessity}

We established the necessity of most of the axioms in \autoref{thm:welldef}. Only continuity is missing.

\begin{lemma}\label{lem:infm} If $u(\vx)>c$, then $\inf\{m_v(\vx) - v :v\in[0, c] \}>0$.
  \end{lemma}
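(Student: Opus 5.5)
Since $u(\vx)>c$, the fixed-point characterization from \autoref{lemma:lemma3} together with the definition $u(\vx)=\inf F_\vx$ gives $v\notin F_\vx$ for every $v<u(\vx)$. In particular $m_v(\vx)>v$ for all $v\in[0,c]$. So $m_v(\vx)-v$ is strictly positive pointwise on the compact interval $[0,c]$. The task is to upgrade this pointwise positivity to a uniform one, without any continuity of $v\mapsto m_v(\vx)$.

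The key step is to use monotonicity of the fan in place of continuity. Since $\Pi$ is monotone decreasing, $v\le c$ implies $\Pi(v)\supseteq\Pi(c)$. That inclusion points the wrong way for bounding $m_v(\vx)$ from below by $m_c(\vx)$, so instead I would pick a level strictly between $c$ and $u(\vx)$ to compare against. Fix $c'$ with $c<c'<u(\vx)$ and choose $v'\in(c,c')$, or more simply work directly with the monotonicity of $v\mapsto m_v(\vx)$, which is nondecreasing because the sets shrink as $v$ rises.

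Because $m_v(\vx)$ is nondecreasing in $v$, for any $v\in[0,c]$ we have $m_v(\vx)-v\ge m_0(\vx)-c$. This bound is too weak in general. The better move is a partition argument. Split $[0,c]$ into finitely many subintervals $[v_{j},v_{j+1}]$ of small length $\delta$. On each piece,
\[
m_v(\vx)-v\ \ge\ m_{v_j}(\vx)-v_{j+1} = \bigl(m_{v_j}(\vx)-v_j\bigr)-\delta .
\]
So it suffices to choose $\delta$ smaller than $\min_j (m_{v_j}(\vx)-v_j)$. This is circular, because the grid depends on $\delta$, and that circularity is the main obstacle.

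To break it, I would argue by contradiction using a sequence. Suppose $v_k\in[0,c]$ satisfy $m_{v_k}(\vx)-v_k\to 0$, and pass to a monotone subsequence with $v_k\to\bar v\in[0,c]$. There are two cases.
\begin{itemize}
\item If the subsequence is nondecreasing ($v_k\uparrow\bar v$), then $m_{v_k}(\vx)\le m_{\bar v}(\vx)$, which gives no contradiction. In this case I would instead use the monotonicity $m_{v_k}(\vx)\ge m_{v_1}(\vx)$. Together with $m_{v_k}(\vx)-v_k\to0$, this forces $\bar v\ge m_{v_k}(\vx)\ge m_{v_j}(\vx)$ for all $j$, with $m_{v_k}(\vx)\to\bar v$. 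Hence $\lim m_{v_k}=\bar v$, and if $v_k=\bar v$ eventually we contradict $m_{\bar v}(\vx)>\bar v$. Otherwise I still need a lower bound on $m_{v_k}$ for $v_k<\bar v$. Here I would use upper hemicontinuity from the hypothesis of \autoref{thm:main}, which makes $v\mapsto m_v(\vx)$ lower semicontinuous: take minimizers $\vpi^k\in\Pi(v_k)$ and a limit point $\vpi\in\Pi(\bar v)$. This gives $m_{\bar v}(\vx)\le\liminf m_{v_k}(\vx)=\bar v$, a contradiction.
\item If the subsequence is nonincreasing ($v_k\downarrow\bar v$), monotonicity gives $m_{v_k}(\vx)\ge m_{\bar v}(\vx)>\bar v$, so $m_{v_k}(\vx)-v_k\to m\ge m_{\bar v}(\vx)-\bar v>0$ along the subsequence. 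This contradicts the assumption.
\end{itemize}

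The delicate point is thus the increasing approach. Upper hemicontinuity, via lower semicontinuity of the minimum of a linear function over an upper hemicontinuous compact-valued correspondence, is exactly what handles it. So I would state that semicontinuity fact first and then run the sequence argument above.
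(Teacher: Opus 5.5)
Your argument is correct and is essentially the paper's proof: take a sequence in $[0,c]$ along which $m_{v_k}(\vx)-v_k$ approaches the infimum and extract a convergent subsequence. Then pick minimizers $\vpi^k\in\Pi(v_k)$ and use upper hemicontinuity to put their limit in $\Pi(\bar v)$, which gives $\lim (m_{v_k}(\vx)-v_k)\geq m_{\bar v}(\vx)-\bar v>0$. The only difference is that your split into increasing and decreasing subsequences, and your appeal to monotonicity of the fan, is unnecessary: the upper hemicontinuity step you use in the increasing case works for any convergent sequence, and that is how the paper argues.
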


\begin{proof}
Observe that  $u(\vx)>c$ implies that for any $v\leq c$, $m_v(\vx)>v$.  To show that $\inf\{m_v(\vx) - v :v\in[0, c] \}>0$, consider a sequence $v_n$ in $[0,c]$. By compactness of $[0,c]$, we may choose $v_n$ to be convergent and let $v=\lim v_n$. Notice that we have $m_{v_n}(\vx)-v_n>0$ for all $n$, and $m_v(\vx)-v>0$.

Since $\Pi(v_n)$ is closed for each $n$, there exist $\vpi_n\in\Pi(v_n)$ such that $m_{v_n}(\vx) = \vpi_n \cdot \vx$. By compactness of $\Delta$, we may assume $\vpi_n\to\vpi$. And by upper hemicontinuity of $\Pi$, we know that $\vpi\in\Pi(v)$. Hence, $\vpi\cdot \vx\geq m_v(\vx)$. Consequently, 
\[
\lim m_{v_n}(\vx)-v_n=\vpi\cdot \vx-v\geq m_v(\vx)-v>0.
\]
\end{proof}

\begin{lemma}\label{lem:cont} If $u(\vx)>u(\vc)$, then there exists $\ep>0$ so that if $\norm{\vx-\vy}<\ep$ then $u(\vy)>u(\vc)$.
  \end{lemma}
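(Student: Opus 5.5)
Set $c=u(\vc)$; by the proof of \autoref{thm:welldef}, $u(\vc)=c$. The hypothesis is therefore $u(\vx)>c$. By \autoref{lem:infm} there is $\delta>0$ such that
\[
m_v(\vx)-v\geq \delta \quad\text{for all } v\in[0,c].
\]
The plan is to show that this uniform gap survives small perturbations of $\vx$. Then no $v\in[0,c]$ can lie in $F_\vy$, which forces $u(\vy)>c$.

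The key estimate is that each $m_v$ is $1$-Lipschitz in the sup norm, uniformly in $v$. For every $\vpi\in\Delta$,
\[
|\vpi\cdot\vx-\vpi\cdot\vy|\leq \max_i|\vx_i-\vy_i|\leq\norm{\vx-\vy}.
\]
Taking minima over the compact set $\Pi(v)\subseteq\Delta$ gives $|m_v(\vx)-m_v(\vy)|\leq\norm{\vx-\vy}$ for every $v$. The bound does not depend on $\Pi(v)$, so no continuity of $\Pi$ is needed here. Choose $\ep=\delta/2$. Then $\norm{\vx-\vy}<\ep$ gives
\[
m_v(\vy)-v\geq \delta-\ep>0\quad\text{for all } v\in[0,c].
\]
So $F_\vy\cap[0,c]=\emptyset$.

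It remains to control the welfare levels $v<0$ that could lie in $F_\vy$. Since $\vy\in\Re^n_+$ and $\vpi\geq 0$, we have $m_v(\vy)\geq 0>v$ for all $v<0$. This holds provided $\Pi(v)$ is defined for $v<0$; if it is not, $F_\vy$ ranges only over $v\geq 0$ and there is nothing to check. Hence $F_\vy\subseteq(c,\infty)$, so $u(\vy)=\inf F_\vy\geq c$.

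To get the strict inequality, use \autoref{lemma:lemma3}: $u(\vy)\in F_\vy$, so $u(\vy)\neq c$ because $c\notin F_\vy$. Therefore $u(\vy)>c=u(\vc)$.

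The only real difficulty is already absorbed by \autoref{lem:infm}. Pointwise positivity of $m_v(\vx)-v$ is not enough; we need it to be uniform over the compact range $[0,c]$, and that uniformity is where upper hemicontinuity was used. Everything after that is the uniform Lipschitz bound above.
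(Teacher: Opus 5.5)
Your proposal is correct and follows the paper's proof essentially step for step. You use the uniform gap from \autoref{lem:infm} and the bound $|m_v(\vx)-m_v(\vy)|\le\norm{\vx-\vy}$, which holds uniformly in $v$; the paper derives this bound through the minimizers $\vpi_\vx,\vpi_\vy$, while you get it from the fact that a pointwise minimum of uniformly $1$-Lipschitz functions is $1$-Lipschitz. Both arguments then conclude that no $v\in[0,c]$ lies in $F_\vy$, and your final paragraph (invoking \autoref{lemma:lemma3} for the strict inequality $u(\vy)>c$ and disposing of $v<0$) spells out a step the paper simply asserts.
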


\begin{proof}
  From \autoref{lem:infm}, we may choose $\ep$ with
  \[
0<\ep < \inf\{m_v(\vx) - v :v\in[0, c] \}.
  \]
  Fix $\vy\in\Re^n_+$ with  $\norm{\vx-\vy}<\ep$. Choose $\vpi_\vx,\vpi_\vy\in\Pi(v)$ with $m_v(\vx)=\vpi_\vx\cdot \vx$ and $m_v(\vy)=\vpi_\vy\cdot \vy$. Then  
  \begin{align*}
    \abs{m_v(\vx) - m_v(\vy)} & = \abs{\vpi_\vx \cdot \vx - \vpi_\vy \cdot \vy}\\
   & = \max\{\vpi_\vx \cdot \vx - \vpi_\vy \cdot \vy, \,  \vpi_\vy \cdot \vy - \vpi_\vx \cdot \vx\} \\
    & \leq \max\{\vpi_\vy \cdot( \vx - \vy), \,  \vpi_\vx \cdot (\vy - \vx)\} \\
    & \leq \max\left\{ \max_{\vpi\in\Pi(v)}\vpi\cdot (\vx - \vy), \,   \max_{\vpi\in\Pi(v)}\vpi \cdot (\vy - \vx) \right\}\\
    & \leq \norm{\vx-\vy}_{\infty}\\
    &\leq \norm{\vx-\vy}\\
    &<\ep,
  \end{align*}
where the first inequality follows because $\vpi_\vx\cdot \vx\leq \vpi_\vy\cdot \vx$ and $\vpi_\vy\cdot \vy\leq \vpi_\vx\cdot \vy$, and the third inequality follows because
\[ 
\max_{\vpi\in\Pi(v)}\vpi\cdot (\vx - \vy)\leq \max_{\vpi\in\Delta}\vpi\cdot (\vx - \vy)\leq \norm{\vx-\vy}_{\infty},\]
and similarly for $\max\{\vpi \cdot (\vy - \vx):\vpi\in\Pi(v)\}$.

Then for any $v\in [0,c]$, $m_v(\vy) - v > m_v(\vx) -\ep - v > 0$ by definition of $\ep$. This means that $u(\vy)> c=u(\vc)$.
\end{proof}

The first part of the weak continuity axiom is exactly \autoref{lem:cont}. For the second part, note that $u(\vc)\geq u(\vx)$ if and only if $c\geq u(\vx)$; and that $u(\vc)\leq u(\vx)$ if and only if $c\leq u(\vx)$. 

\subsection{Proof of \autoref{thm:monincfan}}

\subsubsection{Sufficiency}
We first show that a preference $\succeq$ that satisfies monotonicity,  strong continuity, convexity, mixing invariance, and \owhms has a utility representation. For each $\vx\in\R^n_+$, let $e_\vx = \inf \{c\in\R:\vc\succeq \vx \}$. Then $\vx\mapsto e_\vx$ is a utility representation of $\succeq$. The argument is the same as for \autoref{thm:main} and therefore omitted.
\smallskip

Next, we show that the axioms imply the existence of a monotone increasing and continuous fan $\Pi$. For each $c$, define $U^\circ(c)\coloneqq\{\vx\in\Re^n_+ \mid \vx\succ \vc \}$. Note that $U^\circ(c)$ is convex and nonempty. Furthermore, $U^\circ(c)$ is open by the continuity axiom. Thus, there is a hyperplane properly separating $U^\circ(c)$ and $\{\vc\}$, i.e., there is a non-zero vector $\vpi\in\R^n$ such that $\vpi\cdot U^\circ(c)\geq \vpi\cdot \vc$ and $\vpi\cdot \vx> \vpi\cdot \vc$ for some $\vx\in U^\circ(c)$. By monotonicity of $\succeq$, we obtain that $\vpi>0$. We may then normalize $\vpi$ so that $\vpi\in\Delta$, which implies $\vpi\cdot \vc=c$. 

For $c> 0$, define $\Pi(c)\coloneqq\{\vpi\in\Delta:\vpi\cdot \vx\geq c \text{ for all } \vx\in U^\circ(c)\}$. Observe that $\Pi(c)$ is nonempty, closed, and convex. We proceed to prove the two properties required of $c\mapsto\Pi(c)$:

\smallskip

\noindent\textbf{Monotone increasing:} Consider the case where $c>0$. Take any $c'>c$ and choose any $\vpi\in\Pi(c)$. If $\vx\in U^\circ(c')$ then $(c/c')\vx\succ (c/c') \vc'=\vc$ by \owhms. Thus,
$(c/c')\vx\in U^\circ(c)$ and $\vpi\cdot [(c/c')\vx]\geq c$. Hence $\vpi \cdot \vx\geq c'$. Since $\vx\in U^\circ(c')$ was arbitrary, we have $\vpi\in\Pi(c')$.

Now we may define $\Pi(0)\coloneqq\bigcap_{c>0} \Pi(c)$. By the finite intersection property, $\Pi(0)\neq \emptyset$. We also get that $\Pi(0)\subseteq\Delta$ is closed and convex. Since $U^\circ(0)$ is open, for each $\vx\in U^\circ(0)$, there exists a small enough $c>0$ such that $\vx\in U^\circ(c)$. Thus, for all $\vpi\in\Pi(0)$, we obtain $\vpi\cdot \vx>c>0$.
\smallskip

\noindent\textbf{Upper hemicontinuity:}  Let $c^k\to c$ and choose a convergent sequence $\vpi^k\in\Pi(c^k)$, with $\vpi=\lim \vpi^k$. Consider first the case where $c>0$. For any $\vx\succ \vc$, we have $e_\vx>c$. For large enough $k$, we then have  $e_\vx> c^k$, implying $\vx\succ \vc^k$. Thus $\vpi^k\cdot \vx\geq c^k$ for $k$ large enough. Consequently, $\vpi\cdot \vx\geq c$. Since $\vx\in U^\circ(c)$ was arbitrary, we conclude that $\vpi\in\Pi(c)$. 

Next consider the case where $c=0$. To show $\vpi \in \Pi(0)$, we must show $\vpi \in \Pi(v)$ for every $v > 0$. Since $c^k \to 0$, for any fixed $v > 0$, eventually $c^k < v$. Because $\Pi$ is monotone increasing, $\Pi(c^k) \subseteq \Pi(v)$. Thus $\vpi^k \in \Pi(v)$ for all large $k$. Since $\Pi(v)$ is closed, the limit $\vpi \in \Pi(v)$. This holds for all $v>0$, so $\vpi \in \bigcap_{v>0} \Pi(v) = \Pi(0)$.
\smallskip

We have thus far established the existence of a monotone increasing and upper hemicontinuous $\Pi$. We now prove some properties of $\Pi$ that are useful to establish that $\Pi$ is lower hemicontinuous.

\Claim For any $\vx\succ \vc$, we have that $\vpi\cdot \vx> c$ for all $\vpi\in \Pi(c)$. The argument is the same as \autoref{thm:main} and is thus omitted.
\smallskip

\Claim For any  $\vx\sim \vc$, we have that $\vpi\cdot\vx\geq c$ for all $\vpi\in\Pi(c)$. Moreover, there exists a $\vpi\in\Pi(c)$ such that $\vpi\cdot \vx=c$. The argument is the same as \autoref{thm:main} and is thus omitted.
\smallskip

\Claim For any $\vx\in\R^n_+$ such that $\vpi\cdot\vx\geq c$ for all $\vpi\in\Pi(c)$, we have that $\vx\succeq \vc$. Suppose (for the sake of a contradiction) that $\vc\succ \vx$. Then $\vx\sim \ve_\vx \ll \vc$. By the previous argument, there exists $\vpi\in \Pi(e_\vx)$ with $\vpi\cdot \vx = e_\vx$. But since $\Pi(e_\vx)\subseteq \Pi(c)$, we have found  $\vpi\in \Pi(c)$ such that $\vpi\cdot \vx<c$. A contradiction.
\smallskip

Putting the three claims together, we obtain that $\vx\succeq \vc$ if and only if $\vpi\cdot\vx\geq c$ for all $\vpi\in\Pi(c)$. We now prove that $\Pi$ is lower hemicontinuous, and thus, continuous.
\smallskip

\noindent\textbf{Lower hemicontinuity:} Let $c^k\to c$ and pick any $\vpi\in \Pi(c)$. Let $N_{\vpi}$ be an open ball with center $\vpi$. We claim that, infinitely often, $\Pi(c^k)\cap N_{\vpi}$ is nonempty. 

For any $c^k\geq c$, it is immediate that $\vpi\in\Pi(c)\subseteq\Pi(c^k)$ and there is nothing to prove. In particular, this trivially proves the claim for the case where $c=0$. Let us thus assume that $c^k<c$ for all $k$, which implies that $c>0$. In fact, by considering a subsequence we may assume that $c^k \uparrow c$.

Suppose, towards a contradiction, that $$\left( \bigcup_{k}\Pi(c^k) \right)\cap N_{\vpi} = \os.$$ Since $\Pi$ is monotone, proving that this leads to a contradiction implies that $\Pi(c^k)$ eventually intersects $N_{\vpi}$; so they intersect infinitely often. 

Observe that $\cup_{k}\Pi(c^k)$ is convex by the monotonicity of $\Pi$ and the convexity of each $\Pi(c^k)$. Let $C$ be the conic hull of this set; so $$C=\left\{\la \vx \mid \la\geq 0 \text{ and } \vx \in \bigcup_{k}\Pi(c^k) \right\}.$$ Similarly, let $C'$ be the conic hull of $N_{\vpi}$. The sets $C$ and $C'$ are convex cones that share only the null vector. Hence, by the separating hyperplane theorem, there exists $\vz\in\Re^n$ with $\vz\cdot N_{\vpi}\leq \vz\cdot \cup_{k}\Pi(c^k)$ and $\vz\neq 0$. Since $\{0\}=C\cap C'$,  and $N_{\vpi}$ is an open ball, we have $\vz\cdot \vpi<0\leq \vz\cdot \vpi'$ for all $\vpi'\in \cup_k\Pi(c^k)$. Finally, observe that we can choose $\vz$ with arbitrarily small sup norm (arbitrarily small $\norm{\vz}_{\infty}$) so that $\vc+\vz\gg 0$.

Note that $\vpi \cdot \vz<0$ and $\vpi\in \Pi(c)$ imply that $\vc\succ \vc+ \vz$. Similarly, $\vpi'\cdot \vz\geq 0$ for all $\vpi'\in \cup_k \Pi(c^k)$ implies that $\vc^k+\vz\succeq \vc^k$ for all $k$. But the continuity axiom implies that the graph of the preference relation $\succeq$ is closed. So $c=\lim_{k\to\infty} c^k$ and  $\vc^k + \vz\succeq \vc^k$ for all $k$ contradict that $\vc\succ \vc+\vz$.  We conclude that $(\cup_{k}\Pi(c^k))\cap N_{\vpi} \neq \os$. Since the ball $N_{\vpi}$ was arbitrary, $\Pi$ is lower hemicontinuous.
\smallskip

We are now in a position to wrap up the proof of the sufficiency direction. It remains to show that $\succeq$ has a utility representation of the form $u_{\Pi}$.

For any $\vx\in\R^n_+$, the function $v\mapsto m_v(\vx)\coloneqq\inf\{\vpi\cdot \vx:\vpi\in \Pi(v) \}$ is monotone decreasing and continuous. It is monotone decreasing because $\Pi$ is monotone increasing. It is continuous by the maximum theorem, as $\Pi$ is a continuous (both lower and upper hemicontinuous) correspondence. We also have $m_0(\vx)\geq 0$ and $m_v(\vx)<v$ for $v$ large enough. By the intermediate value theorem, there is $v\geq 0$ with $v=m_v(\vx)$; a fixed point. Since the function is monotonically decreasing, the fixed point is unique.

Let $u(\vx)=e_\vx$. Since $\vx\sim e_\vx$, we have $\vpi\cdot \vx\geq e_\vx$ for all $\vpi\in \Pi(e_\vx)$ with equality for some $\vpi\in\Pi(e_\vx)$. Thus $e_\vx = m_{e_\vx}(\vx)$. Since the fixed point of $v\mapsto m_v(\vx)$ is unique, we have that 
\[
\inf\{v \mid v=m_v(\vx) \} = e_\vx = u(\vx).
\]

\subsubsection{Necessity}

It is important to remember that, in this version of the model where the fan is monotone increasing, there exists a unique fixed point $v=m_v(\vx)$ and that for any $v'<v$, $m_{v'}(\vx)>v'$.  For notational simplicity, here we write $u$ for $u_{\Pi}$. We proceed to show that the representation satisfies the axioms.
\bigskip

\noindent\textbf{Monotonicity:} \\
Let $\vy\geq \vx$. If $u(\vy)<u(\vx)$, we get the following contradiction:
\[
u(\vx)=m_{u(\vx)}(\vx)\leq m_{u(\vx)}(\vy)\leq m_{u(\vy)}(\vy)=u(\vy)<u(\vx),
\]
where the first inequality follows from the fact that $\vx\leq \vy$ and the second one follows because $\Pi(u(\vy))\subseteq \Pi(u(\vx))$ since $\Pi$ is monotone increasing. Hence, $\vy\geq \vx$ implies $u(\vy)\geq u(\vx)$.

Similarly, let $\vy\gg \vx$. If $u(\vy)\leq u(\vx)$, we get the following contradiction:
\[
u(\vx)=m_{u(\vx)}(\vx)< m_{u(\vx)}(\vy)\leq m_{u(\vy)}(\vy)=u(\vy)\leq u(\vx),
\]
where the first inequality follows from the fact that $\vx\ll \vy$ and the second one follows again because $\Pi$ is monotone increasing. Hence, $\vy\gg \vx$ implies $u(\vy)>u(\vx)$.
\bigskip

\noindent\textbf{Continuity:} \\
First we show that upper contour sets are closed. Fix an arbitrary $\vx$. 
Let $\vy^k$ be a sequence with $u(\vy^k)\geq u(\vx)$ and $\vy=\lim_{k\to\infty } \vy^k$. Suppose, towards a contradiction, that $u(\vy)<u(\vx)$. Then 
\[
m_{u(\vy)}(\vy^k) \geq m_{u(\vy^k)}(\vy^k) = u(\vy^k)\geq u(\vx),
\] as $u(\vy)< u(\vy^k)$. But the maximum theorem implies that $m_{u(\vy)}(\vy)=\lim m_{u(\vy)}(\vy^k)$. So we obtain $u(\vy)=m_{u(\vy)}(\vy)\geq u(\vx)>u(\vy)$, which is absurd.

Second, we show that lower contour sets are closed. 
Let $\vy^k$ be a sequence with $u(\vy^k)\leq u(\vx)$ and $\vy=\lim_{k\to\infty } \vy^k$. Suppose, towards a contradiction, that $u(\vy)>u(\vx)$. Note that $u(\vy^k)\geq 0$ by definition of $u$, so the sequence $(u(\vy^k))_{k=0}^\infty$ is bounded. Let $\ep>0$ be such that $u(\vy)-\ep > u(\vx)$. Then,
\[\begin{split}
    u(\vy)=m_{u(\vy)}(\vy)\leq m_{u(\vx)}(\vy)=\lim_{k\to\infty} m_{u(\vx)}(\vy^k) &\leq \liminf_{k\to\infty }m_{u(\vy^k)}(\vy^k) \\ &= \liminf_{k\to\infty } u(\vy^k),
\end{split}\]where the first inequality is because $u(\vx)<u(\vy)$ and the second because $u(\vy^k)\leq u(\vx)$. The second equality follows from the continuity of $\vy\mapsto m_{u(\vx)}(\vy)$. The lim inf is finite because the sequence is bounded. This is the desired contradiction because $u(\vy^k)<u(\vy)-\ep$ for all $k$.
\bigskip

\noindent\textbf{Convexity:} \\
Fix $\vx,\vy\in \Re^n_+$ and $\la\in (0,1)$. 
Suppose that $$u(\la \vx + (1-\la) \vy)=m_c(\la \vx + (1-\la) \vy)=c< c'.$$ Then there exists $\vpi\in\Pi(c)$ with $\vpi\cdot (\la \vx+(1-\la)\vy)=c$. So either $\vpi\cdot \vx\leq c$ or $\vpi\cdot \vy\leq c$. Suppose without loss of generality that it is the former. Then since $\vpi\in\Pi(c)\subseteq \Pi(c')$ we have that $m_{c'}(\vx) \leq c< c'$. Thus we cannot have $u(\vx)\geq c'$. Conclude that if $u(\vx)\geq c'$ and $u(\vy)\geq c'$, then $u(\la \vx + (1-\la) \vy)\geq c'$.
\bigskip

\noindent\textbf{Mixing invariance:} \\
Let $u(\vx)=c$. Then $m_c(\vx)=c$, so there exists $\vpi_\vx\in \Pi(c)$ with $\vpi_\vx\cdot \vx=c$ while $\vpi\cdot \vx\geq c$ for all $\vpi\in\Pi(c)$. Then for any $\la\in (0,1)$ we have $\vpi_\vx\cdot (\la \vx+(1-\la)\vc)=c$ while $\vpi\cdot (\la \vx+(1-\la)\vc)\geq c$ for all $\vpi\in\Pi(c)$. Thus $m_c(\la \vx+(1-\la)\vc)=c$ and therefore $u(\la \vx+(1-\la)\vc)=c$ as the fixed point of $v\mapsto m_v(\vz)$ is unique.
\bigskip

\noindent\textbf{\owhm:} \\
Suppose that $u(\vx)>c$ and $\la \in (0,1)$. Then $m_c(\vx)>c$, so
for all $\vpi\in\Pi(c)$ we have that $\vpi\cdot \vx>c$. But since $\la c\leq c$ (allowing for $c=0$), we have for $\vpi\in\Pi(\la c)\subseteq \Pi(c)$ that $\la \vpi\cdot \vx>\la c$. Thus $m_{\la \vc}(\la \vx)>\la c$ and therefore $u(\la \vx)>\la c$.

\subsection{Proof of \autoref{lemma:monotonicity}}

The mapping $(a,b)\mapsto F(\cdot;a,b)$ is continuous, and $F(\cdot;a,b)$  has a unique fixed point $V(a,b)$ for each $(a,b)\in \mathcal{L}$. 
The function $v\mapsto F(v;a,b)$ is monotone decreasing, so $V(a,b)\in [0,F(0;a,b)]\subseteq [0,1]$ for all $a,b$. Let $(a^k,b^k)\to (a,b)$. By compactness of $[0,1]$ there is a limit point $V^*$ of $V(a^k,b^k)$; by going to a subsequence, we have $V^*=\lim_{k\to \infty} V(a^k,b^k)$. Now continuity of $F$ and $V(a^k,b^k)=F(V(a^k,b^k);a^k,b^k)$ for all $k$ imply that $V^*$ is a fixed point of $v\mapsto F(v;a,b)$. Since fixed points are unique, we have $V^*=V(a,b)$, and since $V^*$ was an arbitrary limit point, we conclude that $V(a,b)=\lim_{k \to \infty} V(a^k,b^k)$. Thus the mapping $(a,b)\mapsto V(a,b)$ is continuous over $\mathcal{L}$. Moreover, for any $(a,b), (a',b')\in\mathcal{L}$ with $(a,b)>(a',b')$, we have $F(v; a,b)>F(v;a',b')$ for all $v\in (0,1)$, so $V(a,b)>V(a',b')$.

\subsection{Proof of \autoref{lemma:wlog}}

Fix any policy $\vx$. First, consider the case where not all type $L$ agents are treated under this policy. Then $\vx_{\min}=L$. Since $k<1$ and the efficient policy prioritizes treating type $H$ agents, we also have $\vx^E_{\min}=L$. By definition, the efficient policy maximizes mean utility, so $\bar\vx^E\geq \bar\vx$. By \autoref{lemma:monotonicity}, 
\[
u_\Pi(\vx)=V(\bar \vx, L)\leq V(\bar\vx^E, L)=u_\Pi(\vx^E). 
\]

Next, consider the case where all type $L$ agents are treated. This is only possible if $k\geq \alpha$, i.e., there are enough ventilators to treat all vulnerable patients. Then $\vx_{\min}=\min\{\gamma L, H\}$. Since $k\geq \alpha$, all type $L$ agents are also treated under the fair policy, so $\vx^F_{\min}=\min\{\gamma L, H\}$. By definition, the fair policy maximizes mean utility subject to treating as many type $L$ agents as possible, so $\bar\vx^F\geq \bar\vx$. By \autoref{lemma:monotonicity}, 
\[
u_\Pi(\vx)=V(\bar \vx, \min\{\gamma L, H\})\leq V(\bar\vx^F, \min\{\gamma L, H\})=u_\Pi(\vx^F). 
\]

\subsection{Proof of \autoref{prop:optimal threshold}}
We divide the $k$-$\alpha$ parameter space into the ``lower triangle'' $\mathcal{L}$ of $(0,1)^2$ and its ``upper triangle'' $\mathcal{U}\coloneqq (0,1)^2\backslash \mathcal{L}$. 

First, consider the case where $(k,\alpha)\in \mathcal U$, so that not all type $L$ agents can be  treated under any policy. Then $\vx^E_{\min}=\vx^F_{\min}=L$. However, $\bar\vx^E>\bar\vx^F$. By \autoref{lemma:monotonicity}, $u_\Pi(\vx^E)> u_\Pi(\vx^F)$.
    
Next, consider the case where $(k,\alpha)\in \mathcal L$. The fair policy can treat all type $L$ agents, while some type $L$ agents remain untreated under the efficient policy. Thus, $\vx^E_{\min}=L<\min\{\gamma L, H\}=\vx^F_{\min}$. Moreover,
\[
\bar\vx^E=\underbrace{\alpha L+(1-\alpha)H+(\gamma-1)kH+(\gamma-1)(H-L)\min\{1-\alpha-k,0\}}_{\eqqcolon \mathcal E (k,\alpha)}.
\]
Intuitively, $\alpha L+(1-\alpha)H$ is the baseline welfare level without any treatment. If we could treat a mass $k$ of type $H$ agents, welfare would rise by $(\gamma-1)kH$. However, we can only do so if $1-\alpha\geq k$. Otherwise, we can only treat a mass $1-\alpha$ of type $H$ agents, and the remaining $k-(1-\alpha)$ ventilators treat type $L$ agents. The last term in $\bar\vx^E$ captures this adjustment. 

On the other hand, 
\[
\bar\vx^F= \underbrace{\alpha L+(1-\alpha)H+(\gamma-1)\alpha L+(\gamma-1)(k-\alpha)H}_{\eqqcolon \mathcal F (k,\alpha)}.
\]
Since $(k,\alpha)\in \mathcal L$, we can treat all type $L$ agents under the fair policy, raising the baseline welfare level by $(\gamma-1)\alpha L$. The remaining $k-\alpha$ ventilators treat type $H$ agents, further increasing welfare by $(\gamma-1)(k-\alpha)H$.

Both $\mathcal{E}$ and $\mathcal{F}$ are continuous over $\mathcal L$. Since $\vx^E$ maximizes mean utility while $\vx^F$ maximizes worst-off utility, we have
\[
\mathcal{E}(k,\alpha)-L>\mathcal{F}(k,\alpha)-\min\{\gamma L,H\}
\]
for all $(k,\alpha)\in\mathcal{L}$. Additionally, $\frac{\partial}{\partial \alpha} \mathcal{F}(k,\alpha)=\gamma(L-H)$ and
\[
\frac{\partial}{\partial \alpha} \mathcal{E}(k,\alpha)=\left\{\begin{array}{ccc}
    L-H &\mbox{if}& k<1-\alpha  \\
     \gamma(L-H)&\mbox{if}& k> 1-\alpha
\end{array}\right. ,
\]
so for each $k\in (0,1)$, both $\mathcal F(k,\cdot)$ and $\mathcal{E}(k,\cdot)$ are strictly decreasing over $(0,k)$. Similarly, $\frac{\partial}{\partial k} \mathcal{F}(k,\alpha)=(\gamma-1)H$ and
\[
\frac{\partial}{\partial k} \mathcal{E}(k,\alpha)=\left\{\begin{array}{ccc}
    (\gamma-1)H &\mbox{if}& k<1-\alpha  \\
    (\gamma-1)L&\mbox{if}& k> 1-\alpha
\end{array}\right. ,
\]
so for each $\alpha\in (0,1)$, both $\mathcal F(\cdot,\alpha)$ and $\mathcal{E}(\cdot, \alpha)$ are strictly increasing over $(\alpha,1)$. 

Define $v^E(k,\alpha)\coloneqq V(\mathcal{E}(k,\alpha), L)$ and $v^F(k,\alpha)\coloneqq V(\mathcal{F}(k,\alpha), \min\{\gamma L, H\})$. Note that $v^E(k,\alpha)=u_\Pi(\vx^E)$ and $v^F(k,\alpha)=u_\Pi(\vx^F)$ for a given $(k,\alpha)\in\mathcal L$. By \autoref{lemma:monotonicity}, $V$ is continuous. Since both $\mathcal{E}$ and $\mathcal{F}$ are continuous over $\mathcal L$, so are $v^{E}$ and $v^{F}$. Additionally, by \autoref{lemma:monotonicity}, $v^E$ and $v^F$ inherit the monotonicity properties of $\mathcal{E}$ and $\mathcal{F}$: for each $k\in (0,1)$, the functions $\alpha\mapsto v^E(k,\alpha)$ and $\alpha\mapsto v^F(k,\alpha)$ are strictly decreasing over $(0,k)$ and thus differentiable almost everywhere; for each $\alpha\in (0,1)$, the functions $k\mapsto v^E(k,\alpha)$ and $k\mapsto v^F(k,\alpha)$ are strictly increasing over $(\alpha,1)$ and thus differentiable almost everywhere.

Fix $k'\in (0,1)$. First, consider the trivial case when $v^E(k',\alpha)< v^F(k',\alpha)$ for all $\alpha\leq k'$. Since we have already established that $v^E(k',\alpha)> v^F(k',\alpha)$ for all $\alpha> k'$, the threshold is $\alpha^*(k')=k'$. 

Next, consider the non-trivial case when $v^E(k',\alpha)\geq v^F(k',\alpha)$ for some $\alpha\leq k'$. Since $v^E$ and $v^F$ are continuous over $\mathcal L$, the Intermediate Value Theorem implies there exists $\alpha'\leq k'$ such that $v^E(k',\alpha')= v^F(k',\alpha')$. We now show that this value is unique.  Using $v'\coloneqq v^E(k',\alpha')=v^F(k', \alpha')$ to simplify the notation, we have 
\begin{align*}
    &\frac{\partial }{\partial \alpha}\Big(v^E(k', \alpha')-v^F(k',\alpha')\Big)\\[6pt]
    =&\frac{(1-\rho(v'))\frac{\partial}{\partial \alpha}\mathcal{E}(k',\alpha')}{1+\rho'(v')\big(\mathcal{E}(k', \alpha')-L\big)}- \frac{(1-\rho(v'))\frac{\partial}{\partial \alpha}\mathcal{F}(k',\alpha')}{1+\rho'(v')\big(\mathcal{F}(k', \alpha')-\min\{\gamma L,H\}\big)}\\[6pt]
    \geq &(1-\rho(v'))\frac{\partial}{\partial \alpha}\mathcal{E}(k',\alpha')\left(\frac{1}{1+\rho'(v')\big(\mathcal{E}(k', \alpha')-L\big)}- \frac{1}{1+\rho'(v')\big(\mathcal{F}(k', \alpha')-\min\{\gamma L,H\}\big)}\right)\\[6pt]
    >&0,
\end{align*}
where both inequalities follow because $\rho$ is strictly increasing,  $$\frac{\partial}{\partial \alpha} \mathcal{F}(k',\alpha')\leq \frac{\partial}{\partial \alpha} \mathcal{E}(k',\alpha')<0,$$ and  $\mathcal{E}(k',\alpha')-L>\mathcal{F}(k',\alpha')-\min\{\gamma L,H\}>0$. The final inequality implies that whenever $v^E(k',\cdot)$ crosses $v^F(k', \cdot)$, it does so from below. Consequently, the threshold value $\alpha'$ is unique, so we set $\alpha^*(k')=\alpha'$.

Thus far, we have shown that $\alpha^*$ is well-defined with $\alpha^*(k)\leq k$ for all $k\in(0,1)$. By continuity of $v^E$ and $v^F$ over $\mathcal L$, $\alpha^*$ is continuous. It only remains to show that $\alpha^*$ is strictly increasing on $(0,1)$. Since $\alpha^*$ is continuous and $\alpha^*(k)\leq k$, the set on which $\alpha^*(k)<k$ is a union of disjoint open intervals. At every endpoint of such an interval, continuity implies that $\alpha^*(k)=k$. On the complement of these intervals, $\alpha^*(k)=k$, so strict monotonicity holds immediately. It therefore suffices to prove strict monotonicity on each interval $(\underline k,\bar k)$ over which $\alpha^*(k)<k$. Consider one such open interval and fix $\underline k<k_1<k_2<\bar k$. Since $\alpha^*(k_1)< k_1<k_2$ and $\alpha^*(k_2)<k_2$, we have $(k_1,\alpha^*(k_1))$, $(k_2,\alpha^*(k_1))$, and $(k_2,\alpha^*(k_2))$ all belong to $\mathcal{L}$.

For any $(k,\alpha)\in\mathcal{L}$, $v^E(k,\alpha)-v^F(k,\alpha)\geq 0$ if and only if
\[   G(k,\alpha)\coloneqq \bigl(1-\rho(v^F(k,\alpha))\bigr)\bigl(\mathcal E(k,\alpha)-\mathcal F(k,\alpha)\bigr)-\rho(v^F(k,\alpha))\bigl(\min\{\gamma L,H\}-L\bigr)\geq 0.
\]
As established above, the mapping $k\mapsto v^F(k,\alpha)$ is strictly increasing. Moreover, because
\[
\frac{\partial}{\partial k}\mathcal E(k,\alpha)\leq\frac{\partial}{\partial k}\mathcal F(k,\alpha),
\]
the mapping $k\mapsto \mathcal E(k,\alpha)-\mathcal F(k,\alpha)$ is weakly decreasing. Since $\mathcal E(k,\alpha)-\mathcal F(k,\alpha)>0$, $G(k,\alpha)$ is strictly decreasing in $k$. Similarly, the mapping $\alpha\mapsto v^F(k,\alpha)$ is strictly decreasing. Moreover, because
\[
\frac{\partial}{\partial \alpha}\mathcal E(k,\alpha)\geq\frac{\partial}{\partial \alpha}\mathcal F(k,\alpha),
\]
the mapping $\alpha\mapsto \mathcal E(k,\alpha)-\mathcal F(k,\alpha)$ is weakly increasing. Since $\mathcal E(k,\alpha)-\mathcal F(k,\alpha)>0$, $G(k,\alpha)$ is strictly increasing in $\alpha$. 

Then, by the definition of the cutoffs $\alpha^*(k_1)$ and $\alpha^*(k_2)$, 
\[
G\bigl(k_2,\alpha^*(k_1)\bigr)<G\bigl(k_1,\alpha^*(k_1)\bigr)=0=G\bigl(k_2,\alpha^*(k_2)\bigr).
\]
Since $G(k_2,\cdot)$ is strictly increasing, it follows that $\alpha^*(k_1)<\alpha^*(k_2)$, as desired. Therefore, $\alpha^*$ is strictly increasing. 
\endgroup

\singlespacing
\bibliographystyle{ecta}
\nocite{}\bibliography{fanreferences}

\end{document}